\definecolor{nym-blue}{HTML}{003581}
\definecolor{nym-orange}{HTML}{F47937}
\definecolor{whitesmoke}{HTML}{F5F5F5}
\newtheorem{lemma}{Lemma}
\newtheorem{observation}[lemma]{Observation}
\newtheorem{theorem}{Theorem}
\newtheorem{definition}{Definition}
\newcommand{\eqdf}{\stackrel{\scriptscriptstyle \triangle}{=}}
\newcommand{\set}[1]{\left\{ #1 \right\}}
\newcommand{\paren}[1]{\left( #1 \right)}
\newcommand{\inv}[1]{\frac{1}{#1}}
\newcommand{\ceil}[1]{\left\lceil {#1} \right\rceil}
\newcommand{\floor}[1]{\left\lfloor {#1} \right\rfloor}
\newcommand{\half}{\frac{1}{2}}
\newcommand{\threehalves}{\frac{3}{2}}
\newcommand{\naturals}{\mathbb{N}}
\newcommand{\rationals}{\mathbb{Q}}
\newcommand{\reals}{\mathbb{R}}
\newcommand{\strip}{\textsc{Strip Cover}\xspace}
\newcommand{\sosc}{\textsc{OnceSC}\xspace}
\newcommand{\srsc}{\textsc{RadSC}\xspace}
\newcommand{\stsc}{\textsc{TimeSC}\xspace}
\newcommand{\sosclong}{\textsc{Set Once Strip Cover}\xspace}
\newcommand{\srsclong}{\textsc{Set Radius Strip Cover}\xspace}
\newcommand{\stsclong}{\textsc{Set Time Strip Cover}\xspace}
\newcommand{\partition}{\textsc{Partition}\xspace}
\newcommand{\rr}{\textsc{RoundRobin}\xspace}
\newcommand{\RR}{\textsc{RR}\xspace}
\newcommand{\all}{\textsc{All}}
\newcommand{\opt}{\textsc{Opt}\xspace}
\begin{document}

\title{\bf Set It and Forget It: \\
Approximating the Set Once Strip Cover Problem}

\author{%
Amotz Bar-Noy \\
Department of Computer Science \\
The Graduate Center of the CUNY \\
New York, NY 10016, USA \\
\texttt{amotz@sci.brooklyn.cuny.edu}
\and
Ben Baumer \\
Department of Mathematics \& Statistics \\
Smith College \\
Northampton, MA 01063, USA \\
\texttt{bbaumer@smith.edu}
\and
Dror Rawitz \\
School of Electrical Engineering \\
Tel Aviv University \\
Tel-Aviv 69978, Israel \\
\texttt{rawitz@eng.tau.ac.il}
}


\begin{titlepage}

\maketitle

\begin{abstract}
We consider the \sosclong problem, in which $n$ wireless sensors are
deployed over a one-dimensional region. Each sensor has a fixed
battery that drains in inverse proportion to a radius that can be set
just once, but activated at any time.  The problem is to find an
assignment of radii and activation times that maximizes the length of
time during which the entire region is covered.  We show that this
problem is NP-hard. Second, we show that \rr, the algorithm in which
the sensors take turns covering the entire region, has a tight
approximation guarantee of $\threehalves$ in both \sosclong and the
more general \strip problem, in which each radius may be set
finitely-many times. Moreover, we show that the more general class of
\emph{duty cycle} algorithms, in which groups of sensors take turns
covering the entire region, can do no better. Finally, we give an
optimal $O(n^2 \log{n})$-time algorithm for the related \srsclong
problem, in which sensors must be activated immediately.
\end{abstract}


\medskip
\noindent
\textbf{Keywords}:
wireless sensor networks,
strip cover,
barrier coverage,
network lifetime.

\thispagestyle{empty}
\end{titlepage}


\section{Introduction}

Suppose that $n$ sensors are deployed over a one-dimensional
region that they are to cover with a wireless network. Each sensor is
equipped with a finite battery charge that drains in inverse
proportion to the sensing radius that is assigned to it, and each
sensor can be activated only once.  In the \sosclong (\sosc) problem,
the goal is to find an assignment of radii and activation times that
maximizes the \emph{lifetime} of the network, namely the length of
time during which the entire region is covered.

Formally, we are given as input the locations $x \in [0,1]^n$ and
battery charges $b \in \mathbb{Q}^n$ for each of $n$ sensors. While we
cannot move the sensors, we do have the ability to set the sensing
radius $\rho_i$ of each sensor and the time $\tau_i$ when it should
become active.  Since each sensor's battery drains in inverse
proportion to the radius we set (but cannot subsequently change), each
sensor covers the region $[x_i - \rho_i, x_i + \rho_i]$ for
$b_i/\rho_i$ time units.  Our task is to devise an algorithm that
finds a schedule $S = (\rho, \tau) \in [0,1]^n \times [0,\infty)^n$
for any input $(x,b)$, such that $[0,1]$ is completely covered for as
long as possible.
%
%


\paragraph*{\bf Motivation.}
Scheduling problems of this ilk arise in many applications,
particularly when the goal is \emph{barrier coverage}
(see~\cite{cardei2004coverage,wang2006survey} for surveys,
or~\cite{kumar2007barrier} for motivation).  Suppose that we have a
highway, supply line, or fence in territory that is either hostile or
difficult to navigate.  While we want to monitor activity along this
line, conditions on the ground make it impossible to systematically
place wireless sensors at specific locations.  However, it is feasible
and inexpensive to deploy adjustable range sensors along this line by,
say, dropping them from an airplane flying overhead
(e.g.~\cite{cardei2005improving,saipulla2009barrier,taniguchi2011uniform}).
Once deployed, the sensors send us their location via GPS, and we wish
to send a single radius-time pair to each sensor as an assignment.
Replacing the battery in any sensor is infeasible. How do we construct
an assignment that will keep this vital supply line completely
monitored for as long as possible?


\paragraph*{\bf Models.}
While the focus of this paper is the \sosc problem, we touch upon
three closely related problems. In each problem the location and
battery of each sensor are fixed, and a solution can be viewed as a
finite set of radius-time pairs.
In \sosc, both the radii and the activation times are variable, but
can be set only once. In the more general \strip problem, the radius
and activation time of each sensor can be set finitely many times. On
the other hand, if the radius of each sensor is fixed and given as
part of the input, then we call the problem of assigning an activation
time to each sensor so as to maximize network lifetime \stsclong
(\stsc).  \srsclong (\srsc) is another variant of \sosc in which all
of the sensors are scheduled to activate immediately, and the problem
is to find the optimal radial assignment.
Figure~\ref{fig:schematic} summarizes the important
differences between related problems and illustrates their
relationship to one another.

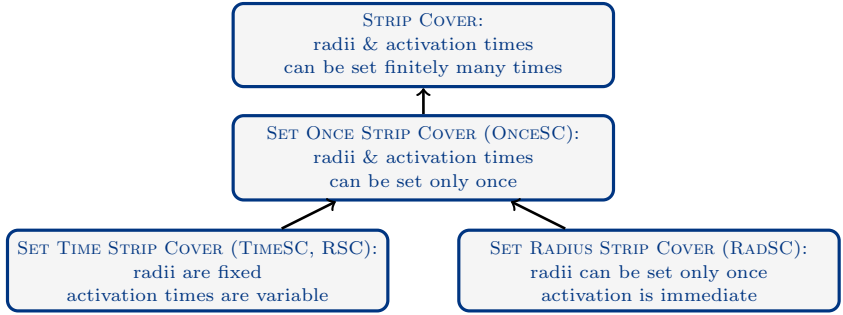
\begin{figure}[t]
\centering
		\begin{tikzpicture}[xscale=1.5,yscale=1.5
				, arrow/.style={->,draw=nym-orange,line width=1pt,bend angle=90}
				, popup/.style={rounded corners,draw=nym-blue,very thick,text width=4.8cm,text centered,fill=whitesmoke,text=nym-blue,font=\fontsize{7}{9}\selectfont}
				]
				\node[popup] (ar-rsc) at (4,4) {\strip:\\radii \& activation times\\can be set finitely many times};
				\node[popup] (set-once) at (4,3) {\sosclong (\sosc):\\radii \& activation times\\can be set only once};
				\node[popup] (rsc) at (2,2) {\stsclong (\stsc, RSC):\\radii are fixed\\activation times are variable};
				\node[popup] (all) at (6,2) {\srsclong (\srsc):\\radii can be set only once\\activation is immediate};
				\path[arrow] (rsc) edge (set-once);
				\path[arrow] (all) edge (set-once);
				\path[arrow] (set-once) edge (ar-rsc);
			\end{tikzpicture}
\caption{Relationship of Problem Variants.}
\label{fig:schematic}
\end{figure}


\paragraph*{\bf Related work.}
%
\stsc,
which is known as \textsc{Restricted Strip Covering},
was shown to be NP-hard by Buchsbaum et
al.~~\cite{buchsbaum2007restricted}, who also gave an $O(\log \log
n)$-approximation algorithm.  Later, a constant factor approximation
algorithm was discovered by Gibson and
Varadarajan~\cite{gibson2009decomposing}.


Close variants of \srsc have been the subject of previous work.
Whereas \srsc requires \emph{area} coverage, Peleg and
Lev-Tov~\cite{lev2005polynomial} studied \emph{target} coverage.  In
this problem the input is a set of $n$ sensors and a finite set of $m$
points on the line that are to be covered, and the goal is to find the
radial assignments with the minimum sum of radii.  They used dynamic
programming to devise a polynomial time alorithm.  Bar-Noy et
al.~\cite{bar2009cheap} improved the running time to $O(n+m)$.
Recently, Bar-Noy et al.~\cite{BRT13} considered an extension of \srsc
in which sensors are mobile.

\strip was first considered by Bar-Noy and
Baumer~\cite{barnoy2011maximizing}, who gave a $\threehalves$ lower
bound on the performance of \rr, the algorithm in which the sensors
take turns covering the entire region (see Observation~\ref{obs:rr}),
but were only able to show a corresponding upper bound of $1.82$.
The similar \textsc{Connected Range Assignment} (CRA) problem, in
which radii are assigned to points in the plane in order to obtain a
connected disk graph, was studied by Chambers et
al.~\cite{chambers2011connecting}. They showed that the best one
circle solution to CRA also yields a $\threehalves$-approximation
guarantee, and in fact, the instance that produces their lower bound
is simply a translation of the instance used in
Observation~\ref{obs:rr}.

The notion of \emph{duty cycling} as a mean to maximize network
lifetime was also considered in the literature of discrete geometry.  In
this context, maximizing the number of covers $t$ serves as a proxy
for maximizing the actual network lifetime.
Pach~\cite{pach1986covering} began the study of decomposability of
multiple coverings.
Pach and T\'{o}th~\cite{pach2009decomposition} showed that a $t$-fold
cover of translates of a centrally-symmetric open convex polygon can
be decomposed into $\Omega(\sqrt{t})$ covers. This
was later improved to the optimal $\Omega(t)$ covers by Aloupis et
al.~\cite{aloupis2010decomposition}, while Gibson and
Varadarajan~\cite{gibson2009decomposing} showed the same result
without the centrally-symmetric restriction.

Motivated by prior invocations of duty
cycling~\cite{slijepcevic2001power,perillo2003optimal,abrams2004set,cardei2005improving,cardei2006improving,cardei2005maximum},
Bar-Noy et al.~\cite{BBR12} studied a duty cycle variant of \sosc with
unit batteries in which sensors must be grouped into shifts of size at
most $k$ that take turns covering $[0,1]$.  (\rr is the only possible
algorithm when $k=1$.)  They presented a polynomial-time algorithm for
$k=2$ and showed that the approximation ratio of this algorithm is
$\frac{35}{24}$ for $k>2$.  It was also shown that its approximation
ratio is at least $\frac{15}{11}$, for $k \geq 4$, and $\frac{6}{5}$,
for $k = 3$. A fault-tolerance model, in which smaller shifts are more
robust, was also proposed.


\paragraph*{\bf Our results.}
%
We introduce the \textsc{Set Once} model that corresponds to the case
where the scheduler does not have the ability to vary the sensor's
radius once it has been activated.  We show that \sosc is NP-hard
(Section~\ref{sec:hardness}) and that \rr is a
$\threehalves$-approximation algorithm for both \sosc and \strip
(Section~\ref{sec:rr}).  This closes a gap between the best previously
known lower and upper bounds ($\frac{3}{2}$ and $1.82$, resp.) on the
performance of this algorithm.
Our analysis of \rr is based on the following approach:  We slice an
optimal schedule into strips in which the set of active sensors is
fixed.  For each such strip we construct an instance with unit
batteries and compare the performance of \rr to the \srsc optimum of
this instance.
In Section~\ref{sec:dc} we show that the class of duty cycle
algorithms cannot improve on this $\threehalves$ guarantee.
In Section~\ref{sec:srsc}, we provide an $O(n^2 \log{n})$-time
algorithm for \srsc.  We note that the same approach would work for
the case where, for every sensor $i$, the $i$th battery is drained in
inverse proportion to $\rho_i^\alpha$, for some $\alpha>0$.

%


\section{Preliminaries}
\label{sec:prelim}

\paragraph*{\bf Problems.}
The \textsc{Set Once Strip Cover} (abbreviated \sosc) is defined as
follows. Let $U \eqdf [0,1]$ be the interval that we wish to cover.
Given is a vector $x = (x_1,\ldots,x_n) \in U^n$ of $n$ sensor
locations, and a corresponding vector $b = (b_1,\ldots,b_n) \in
\mathbb{Q}_+^n$ of battery charges, with $b_i \geq 0$ for all $i$.  We
assume that $x_i \leq x_{i+1}$ for every $i \in \{1,\ldots,n-1\}$.  We
sometimes abuse notation by treating $x$ as a set.  An instance of the
problem thus consists of a pair $I = (x,b)$, and a solution is an
assignment of radii and activation times to sensors.  More
specifically a solution (or \emph{schedule}) is a pair $S =
(\rho,\tau)$ where $\rho_i$ is the \emph{radius} of sensor $i$ and
$\tau_i$ is the \emph{activation time} of $i$. Since the radius of
each sensor cannot be reset, this means that sensor $i$ becomes active
at time $\tau_i$, covers the \emph{range} $[x_i - \rho_i, x_i +
\rho_i]$ for $b_i/\rho_i$ time units, and then becomes inactive since
it has exhausted its entire battery.

Any schedule can be visualized by a space-time diagram in which each
coverage assignment can be represented by a rectangle. It is customary
in such diagrams to view the sensor locations as forming the
horizontal axis, with time extending upwards vertically. In this case,
the coverage of a sensor located at $x_i$ and assigned the radius
$\rho_i$ beginning at time $\tau_i$ is depicted by a rectangle with
lower-left corner $(x_i - \rho_i, \tau_i)$ and upper-right corner
$(x_i + \rho_i, \tau_i + b_i/\rho_i)$. Let the set of all points
contained in this rectangle be denoted as $Rect(\rho_i, \tau_i)$.
%
A point $(u,t)$ in space-time is \emph{covered} by a schedule $(\rho,
\tau)$ if $(u,t) \in \bigcup_{i} Rect(\rho_i, \tau_i)$. The
\emph{lifetime} of the network in a solution $S = (\rho,\tau)$ is the
maximum value $T$ such that every point $(u,t) \in U \times [0,T]$ is
covered. Graphical depictions of two schedules are shown below in
Figure~\ref{fig:rr}.

In \sosc our goal is to find a schedule $S=(\rho,\tau)$ that maximizes
the lifetime $T$. Given an instance $I = (x,b)$, the optimal lifetime
is denoted by $\opt(x,b)$.  (We sometimes use $\opt$, when the
instance is clear from the context.)

The \srsclong (\srsc) problem is a variant of \sosc in which
$\tau_i=0$, for every $i$.  Hence, a solution is simply a radial
assignment $\rho$.  \stsclong (\stsc) is another variant in which the
radii are given in the input, and a solution is an assignment of
activation times to sensors.

\strip is a generalization of \sosc in which a sensor's radius may be 
changed finitely many times.  In this case a solution is a vector of
piece-wise constant functions $\rho(t)$, where $\rho_i(t)$ is the
sensing radius of sensor $i$ at time $t$. The solution is feasible if
$U$ is covered for all $t \in [0,T]$, and if $\int_0^\infty \rho_i(t)
\, dt \leq b_i$, for every $i$.
The segment $[0,1]$ is covered at time $t$, if 
\(
[0,1] \subseteq \bigcup_i [x_i-\rho_i(t),x_i+\rho_i(t)] 
\).


\paragraph*{\bf Maximum lifetime.}
The best possible lifetime of an instance $(x,b)$ is $2\sum_i b_i$.
We state this formally for \sosc, but the same holds for the other
variants.

\begin{observation}
\label{obs:lifetime}
The lifetime of a
\sosc instance $(x,b)$ is at most $2\sum_i b_i$.
\end{observation}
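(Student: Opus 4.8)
The plan is to bound the total ``coverage capacity'' available to any schedule and observe that each time unit of lifetime consumes at least one unit of that capacity per point, or more simply, to integrate over space and time. First I would fix an arbitrary feasible schedule $S = (\rho,\tau)$ with lifetime $T$, so that every point $(u,t) \in U \times [0,T]$ is covered. The key observation is that sensor $i$ contributes, over all time, a total ``space-time area'' of exactly $2\rho_i \cdot (b_i/\rho_i) = 2b_i$, since its rectangle has width $2\rho_i$ and height $b_i/\rho_i$. Summing over $i$, the union $\bigcup_i Rect(\rho_i,\tau_i)$ has total area at most $\sum_i 2b_i = 2\sum_i b_i$.

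Next I would compare this to the area that \emph{must} be covered. Since the strip $U \times [0,T] = [0,1]\times[0,T]$ is entirely contained in $\bigcup_i Rect(\rho_i,\tau_i)$, its area $1 \cdot T = T$ is at most the area of the union. Combining the two inequalities gives $T \leq 2\sum_i b_i$, which is exactly the claim. The argument is essentially a volume/area-counting bound, and it is clean because the ``inverse proportion'' battery law makes each sensor's total space-time contribution independent of the chosen radius.

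The main (minor) obstacle is just being careful about measure-theoretic bookkeeping: the rectangles may overlap, so the area of the union is \emph{at most} the sum of the individual areas (subadditivity of Lebesgue measure), which is the direction we need; and one should note that a sensor with $\rho_i = 0$ or $b_i = 0$ contributes a degenerate (measure-zero) rectangle, consistent with the bound. One should also remark that if $\rho_i$ is allowed to be set to $0$ the sensor is simply never useful, so without loss of generality $\rho_i > 0$ whenever $b_i > 0$. No case analysis on sensor placement is needed, since the bound holds regardless of the $x_i$'s.

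Finally, I would note in a sentence that the same area argument applies verbatim to \srsc and \stsc (the rectangles are still width $2\rho_i$ and height $b_i/\rho_i$), and to \strip as well, since there $\int_0^\infty 2\rho_i(t)\,dt \leq 2b_i$ bounds sensor $i$'s total space-time contribution by $2b_i$ even when the radius varies, so the union again has area at most $2\sum_i b_i \geq T$.
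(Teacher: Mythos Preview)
Your argument is correct and essentially identical to the paper's: both compute the space-time area contributed by each sensor as $2\rho_i \cdot (b_i/\rho_i) = 2b_i$, sum to get $2\sum_i b_i$, and compare against the area $T$ of the strip $[0,1]\times[0,T]$ that must be covered. Your version is somewhat more carefully written (noting subadditivity of measure, handling degenerate $\rho_i=0$ cases, and spelling out the extension to \strip via $\int 2\rho_i(t)\,dt \le 2b_i$), but the idea is the same.
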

\begin{proof}
Consider an optimal solution $(\rho,\tau)$ for $(x,b)$ with lifetime
$T$.  A sensor $i$ covers an interval of length $2\rho_i$ for
$\frac{b_i}{\rho_i}$ time.  The lifetime $T$
is at most the total area of space-time covered by the sensors, which
is at most $\sum_i 2 \rho_i \cdot b_i/\rho_i$.
\end{proof}


\paragraph*{\bf Round Robin.} 
%
We focus on a simple algorithm we call \rr.  The \rr algorithm forces
the sensors to take turns covering $U$, namely it assigns, for every
$i$, $\rho_i = r_i \eqdf \max \{x_i,1-x_i\}$ and $\tau_i =
\sum_{j=1}^{i-1} b_j/\rho_j$.  
The lifetime of \rr is thus
\[\textstyle
\RR(x,b) \eqdf \sum_{i=i}^n b_i/r_i
~.
\]
Notice that Observation~\ref{obs:lifetime} implies an upper bound of
$2$ on the approximation ratio of \rr, since $r_i \leq 1$, for every
$i$.
%
A lower bound of $\threehalves$ on the approximation guarantee of \rr
was given in~\cite{barnoy2011maximizing} using the two sensor instance
$x = (\frac{1}{4},\frac{3}{4})$, $b = (1,1)$.  The relevant schedules
are depicted graphically in Figure~\ref{fig:rr}.

\begin{observation}[\cite{barnoy2011maximizing}]
\label{obs:rr}
The approximation ratio of \rr is at least $\frac{3}{2}$.
\end{observation}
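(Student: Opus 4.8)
The plan is to exhibit the explicit two-sensor instance $x = (\tfrac14, \tfrac34)$, $b = (1,1)$ from~\cite{barnoy2011maximizing} and directly compute both $\RR(x,b)$ and a lower bound on $\opt(x,b)$, then take their ratio. First I would compute $r_i = \max\{x_i, 1-x_i\}$ for each sensor: here $r_1 = \max\{\tfrac14, \tfrac34\} = \tfrac34$ and symmetrically $r_2 = \tfrac34$. Hence $\RR(x,b) = b_1/r_1 + b_2/r_2 = \tfrac43 + \tfrac43 = \tfrac83$.

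Next I would construct a schedule with lifetime $4$ to lower-bound $\opt$. The natural candidate is to give each sensor radius $\tfrac12$: sensor $1$ then covers $[-\tfrac14, \tfrac34]$ and sensor $2$ covers $[\tfrac14, \tfrac54]$, so together they cover $[0,1]$, and each runs for $b_i / \rho_i = 1 / \tfrac12 = 2$ time units. Activating sensor $1$ at $\tau_1 = 0$ and sensor $2$ at $\tau_2 = 2$ yields a schedule covering $U \times [0,4]$, so $\opt(x,b) \geq 4$. (In fact $\opt = 4$ by Observation~\ref{obs:lifetime}, since $2\sum_i b_i = 4$, though only the lower bound is needed here.) Therefore the approximation ratio of \rr on this instance is at least $\opt(x,b)/\RR(x,b) \geq 4/(8/3) = \tfrac32$, which establishes the claim.

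There is essentially no obstacle here: the statement is quoted from prior work and the instance is given explicitly in the text preceding it, so the proof is a short verification. The only point requiring a moment's care is confirming that the radius-$\tfrac12$ schedule genuinely covers the closed interval $[0,1]$ at the endpoints — sensor $1$'s range reaches exactly to $\tfrac34$ on the right and sensor $2$'s range starts exactly at $\tfrac14$ on the left, so the union is $[-\tfrac14,\tfrac54] \supseteq [0,1]$ with overlap on $[\tfrac14,\tfrac34]$, and the two time intervals $[0,2]$ and $[2,4]$ abut, leaving no gap. Since the lifetime is defined as the supremum of $T$ with $U\times[0,T]$ covered, the abutting intervals suffice.
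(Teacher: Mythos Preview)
Your computation of $\RR(x,b) = \tfrac{8}{3}$ is correct, but the schedule you propose to witness $\opt(x,b) \geq 4$ is not valid. You assign each sensor radius $\tfrac{1}{2}$ and then activate them \emph{sequentially}: sensor $1$ on $[0,2]$ and sensor $2$ on $[2,4]$. During the time interval $[0,2]$ only sensor $1$ is active, and its spatial coverage is $[-\tfrac{1}{4},\tfrac{3}{4}]$, which leaves $(\tfrac{3}{4},1]$ uncovered; symmetrically, during $[2,4]$ only sensor $2$ is active and $[0,\tfrac{1}{4})$ is uncovered. The sentence ``together they cover $[0,1]$'' is true only when both sensors are simultaneously active, which never happens in your schedule. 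So this schedule has lifetime $0$, not $4$.

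The fix is the one depicted in the paper's Figure~\ref{fig:rr}(a): assign each sensor radius $\tfrac{1}{4}$ and activate both at time $0$. Then sensor $1$ covers $[0,\tfrac{1}{2}]$ and sensor $2$ covers $[\tfrac{1}{2},1]$, so $[0,1]$ is covered at every instant, and each sensor survives for $1/\tfrac{1}{4} = 4$ time units. This gives $\opt(x,b) \geq 4$ (and in fact $\opt(x,b) = 4$ by Observation~\ref{obs:lifetime}), and the ratio $4/\tfrac{8}{3} = \tfrac{3}{2}$ follows as you intended.
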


\begin{figure}[t]
\centering
\subfloat[$\opt(x,b) = 4$]{%
			\begin{tikzpicture}[xscale=3.25, yscale=1,
				coverage/.style={circle,draw=blue!50,fill=blue!20,thick,opacity=0.5},
				lifetime/.style={->,dashed,thick,gray},
				sensor/.style={circle,draw=black,fill=red!50,thick,opacity=0.75,inner sep=0pt,minimum size=2mm}]
				\draw[step=.5cm,gray,very thin] (0,0) grid (1, 4);	
				\draw[-] (0,0) -- (1,0) coordinate (x axis);
				\foreach \x/\xtext in {0, 0.25/\frac{1}{4}, 0.5/\frac{1}{2}, 0.75/\frac{3}{4}, 1}
					\draw (\x,3.5pt) -- (\x,-3.5pt) node[anchor=north] {$\xtext$};
				\draw[-] (0,0) -- (0,4) coordinate (y axis);
				\foreach \y/\ytext in {0, 1, 2, 3, 4}
					\draw (1pt,\y) -- (-1pt,\y) node[anchor=east] {$\ytext$};
				\coordinate (a) at (0,0);	
				\coordinate (b) at (1,0);	
				\coordinate [label=below left:$x_1$] (s1) at (1/4, 0);
				\coordinate [label=below right:$x_2$] (s2) at (3/4, 0);
				\draw [coverage] (0,0) rectangle (1/2, 4);
				\draw [coverage] (1/2,0) rectangle (1,4);
				\node [sensor] at (s1) [] {};
				\node [sensor] at (s2) [] {};
				\draw [lifetime] (s1) -- (1/4,4);
				\draw [lifetime] (s2) -- (3/4,4);
			\end{tikzpicture}
} 
\hspace{30pt}
\subfloat[$\rr(x,b) = 2 \frac{2}{3}$]{%
			\begin{tikzpicture}[xscale=3.25, yscale=1,
				coverage/.style={circle,draw=blue!50,fill=blue!20,thick,opacity=0.5},
				lifetime/.style={->,dashed,thick,gray},
				sensor/.style={circle,draw=black,fill=red!50,thick,opacity=0.75,inner sep=0pt,minimum size=2mm}]
				\draw[step=.5cm,gray,very thin] (0,0) grid (1, 4);	
				\draw[-] (0,0) -- (1,0) coordinate (x axis);
				\foreach \x/\xtext in {0, 0.25/\frac{1}{4}, 0.5/\frac{1}{2}, 0.75/\frac{3}{4}, 1}
					\draw (\x,3pt) -- (\x,-3pt) node[anchor=north] {$\xtext$};
				\draw[-] (0,0) -- (0,4) coordinate (y axis);
				\foreach \y/\ytext in {0, 1, 2, 3, 4}
					\draw (1pt,\y) -- (-1pt,\y) node[anchor=east] {$\ytext$};
				\coordinate (a) at (0,0);	
				\coordinate (b) at (1,0);	
				\coordinate [label=below left:$x_1$] (s1) at (1/4, 0);
				\coordinate [label=below right:$x_2$] (s2) at (3/4, 0);
				\draw [coverage] (-1/2,0) rectangle (1, 4/3);
				\draw [coverage] (0,4/3) rectangle (3/2,8/3);
				\node [sensor] at (s1) [] {};
				\node [sensor] at (s2) [] {};
				\draw [lifetime] (s1) -- (1/4,4/3);
				\draw [lifetime] (3/4,4/3) -- (3/4,8/3);
			\end{tikzpicture}
}
\caption{\rr vs.\ \opt with $x = (\frac{1}{4},\frac{3}{4})$ and 
$b = (1,1)$.  The sensors are indicated by (red) dots.  Each of the
(blue) rectangles represents the active coverage region for one
sensor.  The dashed gray arrow helps to clarify which sensor is active
at a particular point in time.}
\label{fig:rr}
\end{figure}
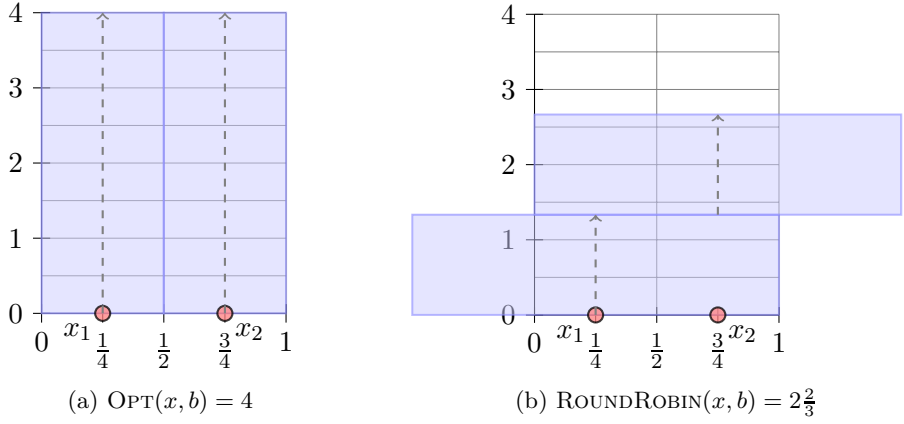	

Given an instance $(x,b)$ of \sosc, let $B \eqdf \sum_i b_i$ be the
total battery charge of the system and $\overline{r} = \sum_i
\frac{b_i}{B} \cdot r_i$ be the average of the $r_i$'s, weighted by
their respective battery charge.  We define the following lower bound
on $\RR(x,b)$:
\[
\RR'(x,b) \eqdf B/\overline{r}
~.
\]

\begin{lemma}
\label{lemma:convex}
$\RR'(x,b) \leq \RR(x,b)$, for every \sosc instance $(x,b)$.
\end{lemma}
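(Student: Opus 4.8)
The plan is to recognize Lemma~\ref{lemma:convex} as a weighted form of the arithmetic mean--harmonic mean inequality applied to the quantities $r_1,\ldots,r_n$ with weights proportional to the battery charges. Concretely, set $w_i \eqdf b_i/B$, so that $w_i \geq 0$ and $\sum_i w_i = 1$, and observe that $\overline{r} = \sum_i w_i r_i$ is the weighted arithmetic mean of the $r_i$'s while $\RR(x,b) = \sum_i b_i/r_i = B \sum_i w_i / r_i$. The claim $\RR'(x,b) = B/\overline{r} \leq \RR(x,b)$ is then equivalent, after dividing by $B$, to
\[
\frac{1}{\sum_i w_i r_i} \leq \sum_i \frac{w_i}{r_i}~,
\]
which is exactly Jensen's inequality for the convex function $f(r) = 1/r$ on $(0,\infty)$.

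I would carry this out in the following order. First, I note that we may assume $B > 0$, since if every $b_i = 0$ both sides are $0$ and there is nothing to prove; and that $r_i = \max\{x_i, 1-x_i\} \geq \half > 0$ for every $i$, so all the reciprocals are well defined and no degeneracy arises. Second, I state that $f(r) = 1/r$ is strictly convex on $(0,\infty)$ (its second derivative is $2/r^3 > 0$), and apply Jensen's inequality with the probability weights $w_i = b_i/B$ to obtain $f(\overline{r}) \leq \sum_i w_i f(r_i)$. Third, I multiply through by $B$ and rewrite both sides in the original notation to conclude $B/\overline{r} \leq \sum_i b_i/r_i = \RR(x,b)$.

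An equally short alternative, which avoids invoking Jensen explicitly, is the Cauchy--Schwarz inequality: writing $b_i = \sqrt{b_i/r_i}\cdot\sqrt{b_i r_i}$ gives
\[
B^2 = \paren{\sum_i b_i}^2 \leq \paren{\sum_i \frac{b_i}{r_i}} \paren{\sum_i b_i r_i} = \RR(x,b)\cdot B\overline{r}~,
\]
and dividing by $B\overline{r} > 0$ yields the claim. I would probably present this version, as it is self-contained and makes the equality case ($r_i$ constant over all $i$ with $b_i > 0$) transparent.

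There is no real obstacle here: the statement is a one-line consequence of a classical inequality, and the only thing to be careful about is the trivial bookkeeping that the $r_i$ are bounded away from $0$ and that the $b_i$ are not all zero, so that all divisions are legitimate.
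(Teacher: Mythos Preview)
Your proposal is correct, and the Cauchy--Schwarz argument you give is essentially identical to the paper's proof: the paper writes $\sum_i b_i/r_i = \sum_i b_i^2/(b_i r_i) \geq (\sum_i b_i)^2/\sum_i b_i r_i = B/\overline{r}$, invoking the Engel (Titu) form $\sum_j c_j^2/d_j \geq (\sum_j c_j)^2/\sum_j d_j$ of Cauchy--Schwarz with $c_j=b_j$, $d_j=b_j r_j$. Your Jensen formulation is an equivalent one-line alternative and the degeneracy remarks ($r_i\geq\tfrac12$, $B>0$) are harmless bookkeeping the paper leaves implicit.
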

\begin{proof}
We have that
\[
\RR(x,b) 
=    \sum_{i=1}^n \frac{b_i}{r_i} 
=    \sum_{i=1}^n \frac{b_i^2}{b_i r_i} 
\geq \frac{(\sum_{i=1}^n b_i)^2}{\sum_{i=1}^n b_i r_i} 
=    \frac{\sum_{i=1}^n b_i}{\overline{r}} 
= \RR'(x,b)
~,
\]
where the inequality is due to 
an implication of the Cauchy-Schwarz Inequality: $\sum_j
\frac{c_j^2}{d_j} \geq \frac{(\sum_j c_j)^2}{\sum_j d_j}$, for any
positive $c,d \in \reals^n$.
\end{proof}

\section{Set Once Hardness Result}
\label{sec:hardness}

In this section we show that \sosc is NP-hard.  This is done using a
reduction from \partition.
	
\begin{theorem}
\label{thm:hardness}
\sosc is NP-hard.
\end{theorem}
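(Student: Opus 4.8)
The plan is to reduce from \partition: given positive integers $a_1, \ldots, a_m$ with $\sum_j a_j = 2A$, decide whether some subset sums to $A$. I would build a \sosc instance in which a ``yes'' answer to \partition corresponds exactly to achieving a certain target lifetime $T^*$, and a ``no'' answer forces the lifetime strictly below $T^*$. The natural idea is to place two sensors symmetrically about the center — say at $x = \tfrac14$ and $x = \tfrac34$ with large batteries — whose job is to cover the ``near'' half of $[0,1]$ at full efficiency ($r = \tfrac34$), and then introduce a collection of ``item'' sensors, one per $a_j$, located at the center $x = \tfrac12$ (so each has $r_j = \tfrac12$) with battery proportional to $a_j$. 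A sensor at $\tfrac12$ with radius $\tfrac12$ covers all of $[0,1]$; the question becomes how to schedule the two big sensors together with the item sensors so that at every instant $[0,1]$ is covered, and the key constraint is that whenever a big sensor is ``resting,'' enough item-sensor battery must be spent to cover the gap it leaves — and spending that battery at radius $\tfrac12$ is wasteful, so there is exactly enough slack to force a balanced split.

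The key steps, in order: (1) Fix the reduction precisely — choose the locations, set the big-sensor batteries so that each big sensor alone can cover its half for exactly time $T^*$ if run at radius $\tfrac34$ continuously, and set item-sensor battery $b_j = c\,a_j$ for a suitable constant $c$; then compute the target lifetime $T^*$ as a function of $A$ and $c$. (2) Show the ``yes'' direction: given a partition $(S, \bar S)$ with $\sum_{j\in S} a_j = \sum_{j \in \bar S} a_j = A$, exhibit an explicit schedule — e.g. the left big sensor covers $[0,\tfrac12]$ for the first half of the lifetime while the $S$-items cover $[\tfrac12,1]$, then roles swap — and verify feasibility (coverage at all times, battery budgets respected) and that it attains lifetime $T^*$. (3) Show the ``no'' direction, which is the crux: argue that any schedule of lifetime $\ge T^*$ must (a) run each big sensor at radius exactly $\tfrac34$ for its entire battery with no idle time, by an area/efficiency argument leaning on Observation~\ref{obs:lifetime} applied appropriately, so the big sensors cover a fixed total amount and leave, at each time, exactly one uncovered half-interval; and (b) that the item sensors, being forced to run at radius $\tfrac12$, have a total ``coverage-time budget'' $\sum_j b_j/(\tfrac12) = 2cA \cdot 2 = $ (a fixed quantity) that must be apportioned between the time intervals where the left half is exposed and where the right half is exposed, each of which needs a prescribed amount; equality throughout forces the item batteries to split into two equal halves, i.e. a valid \partition.

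The main obstacle I anticipate is step (3b): ruling out ``clever'' schedules in which item sensors are run at intermediate radii, or in which the two big sensors overlap in time, or in which an item sensor is split across the left-exposed and right-exposed phases. Handling this likely requires showing that at the target lifetime every inequality in the area accounting (Observation~\ref{obs:lifetime}-style) is tight, which pins down radii to their extreme values and forces a clean combinatorial structure; some care is needed because a sensor at $\tfrac12$ run at radius slightly less than $\tfrac12$ covers slightly less than $[0,1]$ but is more battery-efficient, so the instance parameters $c$ and $T^*$ must be chosen so that this trade-off is never profitable. I would also need the usual remark that \partition is weakly NP-hard and the batteries here are rationals of polynomial size in the input, so the reduction is polynomial. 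Once the tightness of the accounting is established, extracting the balanced subset is routine.
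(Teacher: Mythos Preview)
Your reduction as stated does not encode \partition. With all item sensors located at $\tfrac12$, an item sensor that is to cover $[\tfrac12,1]$ (or $[0,\tfrac12]$) must have radius $\tfrac12$, and at that radius it already covers all of $[0,1]$; so whenever an item sensor is active with enough reach to fill the gap left by one big sensor, it renders \emph{both} big sensors redundant. Consequently the zero-waste schedule never pairs an item sensor with a single big sensor: one simply runs the two big sensors together at radius $\tfrac14$ (covering $[0,\tfrac12]$ and $[\tfrac12,1]$ exactly) for their full battery life, and then runs the item sensors one after another at radius $\tfrac12$. This attains the area bound $2\sum_i b_i$ for \emph{every} arrangement of the item batteries, so no partition is ever forced, and your step~(3b) cannot be carried out. (There is also a radius slip in the write-up: a sensor at $\tfrac14$ covers its near half $[0,\tfrac12]$ with radius $\tfrac14$, not $\tfrac34$; radius $\tfrac34$ would spill half its coverage outside $[0,1]$ and is certainly not ``full efficiency.'')

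The paper's construction repairs exactly this defect by placing the two side sensors at $\tfrac16$ and $\tfrac56$, each with battery $B=\tfrac12\sum_j a_j$. At their waste-free radius $\tfrac16$ they cover $[0,\tfrac13]$ and $[\tfrac23,1]$, leaving a middle gap $[\tfrac13,\tfrac23]$ that a central sensor can fill with radius $\tfrac16$ \emph{without} reaching the endpoints. Now the central sensors have two genuinely distinct zero-waste modes---radius $\tfrac16$ while the side sensors are on, or radius $\tfrac12$ alone---and since in \sosc each radius is set once, every central sensor is committed to one mode. The side sensors last $6B$ time units, so the radius-$\tfrac16$ group must carry total battery exactly $B$, and the remaining $2B$ time units require the radius-$\tfrac12$ group to carry battery exactly $B$ as well; this is precisely a partition.
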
	
\begin{proof}
Let $Y = \{y_1,\ldots,y_n\}$ be a given instance of \partition, and
define $B = \frac{1}{2} \sum_{i=1}^n y_i$.  We create an instance of
\sosc by placing $n$ sensors with battery $y_i$ at $\half$, and two
additional sensors equipped with battery $B$ at $\inv{6}$ and
$\frac{5}{6}$, respectively.  That is, the instance of \sosc consists
of sensor locations $x = (\inv{6}, \underbrace{\textstyle \half,
\ldots, \half}_n, \frac{5}{6} )$ and batteries $b = (B,y_1,\ldots,y_n,
B)$.  We show that $Y \in \partition$ if and only if the maximum
possible lifetime of $8B$ is achievable for the \sosc instance.
		
First, suppose $Y \in \partition$, hence there exist two non-empty
disjoint subsets $Y_0, Y_1 \subseteq Y$, such that $Y_0 \cup Y_1 = Y$,
and $\sum_{y \in Y_0} y = B = \sum_{y \in Y_1} y$.  Schedule the
sensors in $Y_0$ to iteratively cover the region
$[\inv{3},\frac{2}{3}]$.  Since all of these sensors are located at
$\half$, this requires that each sensor's radius be set to $\inv{6}$,
i.e. $\rho_{i+1} = \inv{6}$, for every $i \in Y_0$.
Since the sum of their batteries is $B$, this region can be covered
for exactly $6B$ time units.  With the help of the additional sensors
located at $\inv{6}$ and $\frac{5}{6}$, whose radii are also set to
$\rho_1 = \rho_{n+2} = \inv{6}$, the sensors in $Y_0$ can thus cover
$[0,1]$ for $6B$ time units (see Figure~\ref{fig:hardness} for an
example).  Next, the sensors in $Y_1$ can cover $[0,1]$ for an
additional $2B$ time units, since they all require a radius of
$\rho_{i+1} = \half$, for every $i \in Y_1$.  Thus, the total lifetime
is $8B$.

\begin{figure}[t]
\centering		
\begin{tikzpicture}[domain=0:1,xscale=4.5, yscale=0.1	
		,coverage/.style={circle,draw=blue!50,fill=blue!20,thick,opacity=0.5}
		,lifetime/.style={->,dashed,thick,gray}
		,sensor/.style={circle,draw=black,fill=red!50,thick,opacity=0.75,inner sep=0pt,minimum size=2mm}
		,moved/.style={circle,draw=black,fill=blue!50,thick,opacity=0.75,inner sep=0pt,minimum size=2mm}]
	\draw[step=1cm,gray,very thin] (0,0) grid (1, 40);	
	\draw[-] (0,0) -- (1,0) coordinate (x axis);
		\foreach \x/\xtext in {0, 0.33333/\frac{1}{3}, .666666/\frac{2}{3}, 1}
			\draw (\x, 16pt) -- (\x, -16pt) node[anchor=north] {$\xtext$};
	\draw[-] (0,0) -- (0,40) coordinate (y axis);
		\foreach \y/\ytext in {0, 10, 20, 30, 40}
			\draw (0.5pt,\y) -- (-0.5pt,\y) node[anchor=east] {$\ytext$};
	\draw [coverage] (0,0) rectangle (1/3, 30);
	\draw [coverage] (1/3,0) rectangle (2/3, 18);
	\draw [coverage] (1/3,18) rectangle (2/3, 30);
	\draw [coverage] (2/3,0) rectangle (1, 30);
	\draw [coverage] (0,30) rectangle (1, 32);
	\draw [coverage] (0,32) rectangle (1, 40);
	\node [sensor] (s1) at (1/6, 0) [label=below:$x_1$] {};
	\node [sensor] (s2) at (1/2, 0) [label=below:$x_{\{2,3,4,5\}}$] {};
	\node [sensor] (s3) at (1/2, 1) [] {};
	\node [sensor] (s4) at (1/2, 2) [] {};
	\node [sensor] (s5) at (1/2, 3) [] {};
	\node [sensor] (s6) at (5/6, 0) [label=below:$x_6$] {};
	\draw [lifetime] (s1) -- (1/6,30);
	\draw [lifetime] (1/2,0) -- (1/2,18);
	\draw [lifetime] (1/2,18) -- (1/2,30);
	\draw [lifetime] (5/6,0) -- (5/6,30);
	\draw [lifetime] (1/2,30) -- (1/2,32);
	\draw [lifetime] (1/2,32) -- (1/2,40);
\end{tikzpicture} 	
\caption{Proof of NP-hardness. $Y = \{1,2,3,4\}$ is a given instance 
of \textsc{Partition}, and $(x,b) = \left( (\inv{6}, \inv{2}, \ldots,
\inv{2}, \frac{5}{6}), (5,1,2,3,4,5) \right)$ is the translated \sosc
instance.}  
\label{fig:hardness}
\end{figure}

Now suppose that for such a \sosc instance, the lifetime of $8B$ is
achievable. Since the maximum possible lifetime is achievable, no
coverage can be wasted in the optimal schedule. In this case the radii
of the sensors at $\inv{6}$ and $\frac{5}{6}$ must be exactly
$\inv{6}$, since otherwise, they would either not reach the endpoints
$\{0,1\}$, or extend beyond them. Moreover, due the fact that all of
the other sensors are located at $\half$, and their coverage is thus
symmetric with respect to $\half$, it cannot be the case that sensor
$1$ and sensor $n+2$ are active at different times. Thus, the solution
requires a partition of the sensors located at $\half$ into two
groups: the first of which must work alongside sensors $1$ and $n+2$
with a radius of $\inv{6}$ and a combined lifetime of $6B$; and the
second of which must implement \rr for a lifetime of $2B$. The
batteries of these two partitions form a solution to \partition.
\end{proof}


\section{Round Robin}
\label{sec:rr}

We show in Appendix~\ref{sec:hardness} that \sosc is NP-hard, so 
we turn our attention to approximation algorithms.  While \rr is among
the simplest possible algorithms (note that its running time is
exactly $n$), the precise value of its approximation ratio is not
obvious (although it is not hard to see that $2$ is an upper bound).
In~\cite{barnoy2011maximizing} an upper bound of $1.82$ and a lower
bound of $\threehalves$ were shown.
In this section, we show that the approximation ratio of \rr in \sosc
is exactly $\threehalves$.
The structure of the proof is as follows.  We start with an optimal
schedule $S$, and cut it into disjoint time intervals, or strips, such
that the same set of sensors is active within each time interval.
Each strip induces a \srsc instance $I_j$ and a corresponding solution
$S_j$.
Next, we show that for any such instance $I_j$, there exists a
unit-battery instance $I_j'$ with the same optimum lifetime. Finally,
we prove a lower bound on the performance of \rr on such unit battery
instances. By combining these results, we prove that $\RR(x,b) \geq
\frac{2}{3} T$.
%


\subsection{Cutting the Schedule into Strips}

Given an instance $I = (x,b)$, and a solution $S = (\rho,\tau)$ with
lifetime $T$, let $\Omega$ be the set of times until $T$ in which a
sensor was turned on or off, namely
\(
\Omega = \bigcup_i \{\tau_i,\tau_i+b_i/\rho_i\} \cap [0,T]
\).
Let $\Omega = \set{0=\omega_0,\ldots,\omega_\ell = T}$, where $\omega_j <
\omega_{j+1}$, for every $j$. 
Next, we partition the time interval $[0,T]$ into the sub-intervals
$[\omega_j,\omega_{j+1}]$, for every $j \in \{0,\ldots,\ell-1\}$.  

Next, we define a new instance for every sub-interval.  For every $j
\in \{0,\ldots,\ell-1\}$, let $x^j \subseteq x$ be the set of sensors
that participate in covering $[0,1]$ during the $j$th sub-interval of
time, i.e.,
\(
x^j = 
\set{x_i : [\omega_j,\omega_{j+1}] \subseteq [\tau_i, \tau_i+b_i/ \rho_i] }
\).
Also, let $T_j = \omega_{j+1} - \omega_j$, and let $b^j_i$ be the
energy that was consumed by sensor $i$ during the $j$th sub-interval,
i.e., $b^j_i = \rho_i \cdot T_j$.  Observe that $I_j = (x^j,b^j)$ is a
valid instance of \srsc, for which $\rho^j$, where $\rho^j_i =
\rho_i$ for every sensor $i$ such that $x_i \in x^j$, is a solution
that achieves a lifetime of exactly $T_j$.
Figure \ref{fig:strip_cut} 
provides an illustration of this procedure.

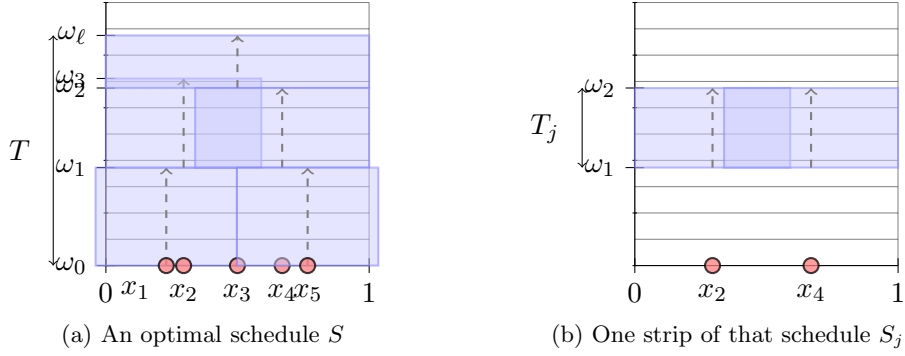
\begin{figure}[t]
\centering		
\subfloat[An optimal schedule $S$]{%
\begin{tikzpicture}[domain=0:1,xscale=3.5, yscale=0.35
		,coverage/.style={circle,draw=blue!50,fill=blue!20,thick,opacity=0.5}
		,lifetime/.style={->,dashed,thick,gray}
		,sensor/.style={circle,draw=black,fill=red!50,thick,opacity=0.75,inner sep=0pt,minimum size=2mm}
		,moved/.style={circle,draw=black,fill=blue!50,thick,opacity=0.75,inner sep=0pt,minimum size=2mm}]
	\def\maxT{10}
	\def\T{8.749078}
	\def\xa{0.2293328}
	\def\xb{0.2951739}
	\def\xc{0.499632}
	\def\xd{0.5197192}
	\def\xe{0.5370125}
	\def\xf{0.6695036}
	\def\xg{0.7662748}
	\draw[step=1cm,gray,very thin] (0,0) grid (1, \maxT);	
	\draw[-] (0,0) -- (1,0) coordinate (x axis);
		\foreach \x/\xtext in {0, 1}
			\draw (\x, 8pt) -- (\x, -8pt) node[anchor=north] {$\xtext$};
	\draw[-] (0,0) -- (0,\maxT) coordinate (y axis);
		\foreach \y/\ytext in {0, 2, ..., \maxT}
			\draw (0.25pt,\y) -- (-0.25pt,\y) node[anchor=east] {};	
	\def\wa{3.724797}
	\def\wb{6.750549}
	\def\wc{7.112631}
	\draw (1pt,0) -- (-1pt,0) node[anchor=east] {$\omega_0$};
	\draw (1pt,\wa) -- (-1pt,\wa) node[anchor=east] {$\omega_1$};
	\draw (1pt,\wb) -- (-1pt,\wb) node[anchor=east] {$\omega_2$};
	\draw (1pt,\wc) -- (-1pt,\wc) node[anchor=east] {$\omega_3$};
	\draw (1pt,\T) -- (-1pt,\T) node[anchor=east] {$\omega_\ell$};
	\node [] at (-0.21, 0.5*\T) [label=left:$T$] {};
	\draw [<->] (-0.2,0) -- (-0.2,\T);
	\draw [coverage] (-0.03913823, 0) rectangle (0.4978038, 3.724797);
	\draw [lifetime] (0.2293328, 0) -- (0.2293328, 3.724797);
	\node [sensor] (s1) at (0.2293328, 0) [label=below left:$x_{1}$] {};
	\draw [coverage] (0, 3.724797) rectangle (0.5903477, 7.112631);
	\draw [lifetime] (0.2951739, 3.724797) -- (0.2951739, 7.112631);
	\node [sensor] (s2) at (0.2951739, 0) [label=below:$x_{2}$] {};
	\draw [coverage] (-0.0007360573, 6.750549) rectangle (1, 8.749078);
	\draw [lifetime] (0.499632, 6.750549) -- (0.499632, 8.749078);
	\node [sensor] (s3) at (0.499632, 0) [label=below:$x_{3}$] {};
	\draw [coverage] (0.3390073, 3.724797) rectangle (1, 6.750549);
	\draw [lifetime] (0.6695036, 3.724797) -- (0.6695036, 6.750549);
	\node [sensor] (s6) at (0.6695036, 0) [label=below:$x_{4}$] {};
	\draw [coverage] (0.4978038, 0) rectangle (1.034746, 3.724797);
	\draw [lifetime] (0.7662748, 0) -- (0.7662748, 3.724797);
	\node [sensor] (s7) at (0.7662748, 0) [label=below:$x_{5}$] {};
\end{tikzpicture}

%
%
}
\hspace{40pt}
\subfloat[One strip of that schedule $S_j$]{%
\begin{tikzpicture}[domain=0:1,xscale=3.5, yscale=0.35
		,coverage/.style={circle,draw=blue!50,fill=blue!20,thick,opacity=0.5}
		,lifetime/.style={->,dashed,thick,gray}
		,sensor/.style={circle,draw=black,fill=red!50,thick,opacity=0.75,inner sep=0pt,minimum size=2mm}
		,moved/.style={circle,draw=black,fill=blue!50,thick,opacity=0.75,inner sep=0pt,minimum size=2mm}]
	\def\maxT{10}
	\def\T{12.53535}
	\def\xa{0.2293328}
	\def\xb{0.2951739}
	\def\xc{0.499632}
	\def\xd{0.5197192}
	\def\xe{0.5370125}
	\def\xf{0.6695036}
	\def\xg{0.7662748}
	\draw[step=1cm,gray,very thin] (0,0) grid (1, \maxT);	
	\draw[-] (0,0) -- (1,0) coordinate (x axis);
		\foreach \x/\xtext in {0, 1}
			\draw (\x, 8pt) -- (\x, -8pt) node[anchor=north] {$\xtext$};
	\draw[-] (0,0) -- (0,\maxT) coordinate (y axis);
		\foreach \y/\ytext in {0, 2, ..., \maxT}
			\draw (0.25pt,\y) -- (-0.25pt,\y) node[anchor=east] {};
	\def\wa{3.724797}
	\def\wb{6.750549}
	\draw (1pt,\wa) -- (-1pt,\wa) node[anchor=east] {$\omega_1$};
	\draw (1pt,\wb) -- (-1pt,\wb) node[anchor=east] {$\omega_2$};
	\node [] at (-0.21, 5.2) [label=left:$T_j$] {};
	\draw [<->] (-0.2,3.724797) -- (-0.2,6.750549);
	\draw [coverage] (0, 3.724797) rectangle (0.5903477, 6.750549);
	\draw [lifetime] (0.2951739, 3.724797) -- (0.2951739, 6.750549);
	\node [sensor] (s2) at (0.2951739, 0) [label=below:$x_{2}$] {};
	\draw [coverage] (0.3390073, 3.724797) rectangle (1, 6.750549);
	\draw [lifetime] (0.6695036, 3.724797) -- (0.6695036, 6.750549);
	\node [sensor] (s6) at (0.6695036, 0) [label=below:$x_{4}$] {};
\end{tikzpicture}

%
%
}
\caption{Cutting an optimal schedule into strips. Note that coverage 
overlaps may occur in both the horizontal and vertical directions in
the optimal schedule, but only horizontally in a strip.}
\label{fig:strip_cut}
\end{figure}

We further modify the instance $I_j=(x^j,b^j)$ and the solution
$\rho^j$ as follows:
\begin{itemize*}
\item Starting with $i=1$, remove sensor $i$ from the instance, if 
      the interval $[0,1]$ is covered during $[\omega_j,\omega_{j+1}]$
      without $i$.
\item Decrease the battery and the radius of the left-most sensor as 
      much as possible, and also decrease the battery and the radius
      of the right-most sensor as much as possible.
\end{itemize*}

\begin{observation}
\label{obs:annoying}
Let sensors $1$ and $m$ be the leftmost and rightmost sensors in
$x^j$.  Then, either $\rho^j_1 = x^j_1$ or the interval
$[0,x^j_1+\rho^j_1)$ is only covered by sensor $1$.  Also, either
$\rho^j_m = 1-x^j_m$ or the interval $(x^j_m-\rho^j_m,1]$ is only
covered by sensor $m$.
\end{observation}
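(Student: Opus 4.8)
The plan is to prove the statement about the leftmost sensor, which I call sensor $1$; the statement about the rightmost sensor $m$ then follows by reflecting $[0,1]$ about $\half$. Abbreviate $x_1,\rho_1$ for $x^j_1,\rho^j_1$, and let $C$ be the union of the ranges of all the \emph{other} surviving sensors after the two modification steps. Two structural facts drive the argument. First, the greedy removal step (which processes $i=1,2,\ldots$) leaves an \emph{irredundant} family: every surviving sensor $i$ owns a \emph{private} point $p_i\in[0,1]$ contained in the range of $i$ but in the range of no other surviving sensor. Second, shrinking $\rho_1$ ``as much as possible'' makes $\rho_1$ the least radius for which $[0,1]\setminus C$ is contained in the range $[x_1-\rho_1,x_1+\rho_1]$; since $C$ is a finite union of closed intervals this least value is attained and equals $\max\{x_1-\inf([0,1]\setminus C),\ \sup([0,1]\setminus C)-x_1\}$ (the set $[0,1]\setminus C$ is nonempty because sensor $1$ survived).

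To get irredundancy, I would inspect the instant the scan examines sensor $i$: the sensors still present then are the survivors below $i$ together with all sensors indexed $\ge i$, a superset of the final family; so if the survivors other than $i$ covered $[0,1]$, then $[0,1]$ was already covered without $i$ at that instant and $i$ would have been deleted, a contradiction. A private point $p_i$ also survives the shrinking of the leftmost/rightmost sensors: shrinking only removes coverage, so if $p_i$ left the range of its unique coverer it would become uncovered, contradicting that $[0,1]$ stays covered.

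The core argument then proceeds in three steps. (1) Sensor $1$ reaches $0$, i.e.\ $\rho_1\ge x_1$: if not, then $x_1-\rho_1>0$, and some other surviving sensor $k$ covers $0$ (with its final radius), whence $\rho_k\ge x_k\ge x_1$ and its range contains $[0,x_k+\rho_k]$; but sensor $1$'s private point obeys $0<p_1\le x_1+\rho_1<2x_1$ and avoids sensor $k$'s range, forcing $p_1>x_k+\rho_k\ge 2x_k\ge 2x_1>p_1$, impossible. Together with the formula for the minimal radius (in which $x_1-\inf([0,1]\setminus C)\le x_1$, so the maximum is the second term), step (1) shows that unless $\rho_1=x_1$ --- the first alternative --- we have $x_1+\rho_1=s$, where $s\eqdf\sup([0,1]\setminus C)$, and moreover $x_1-\rho_1<0$, so the range of sensor $1$ contains $[0,s]$. (2) Every other surviving sensor $i$ has $p_i>s$: indeed $p_i$ is outside the original range of sensor $1$, hence outside its current range, which contains $[0,s]$; since $p_i\in[0,1]$, either $s=1$ --- and then irredundancy forces sensor $1$ to be the sole survivor --- or $p_i\in(s,1]$. (3) For every $\varepsilon>0$ the interval $(s-\varepsilon,s)$ meets $[0,1]\setminus C$ and so is not contained in the range of sensor $i$; thus the range of sensor $i$, an interval containing $p_i>s$, must have left endpoint $\ge s$ and hence be disjoint from $[0,s)$. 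Letting $i$ run over all other survivors yields that $[0,x_1+\rho_1)=[0,s)$ is covered by sensor $1$ alone.

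I expect the main obstacle to be step (1): a priori the leftmost sensor need not be the one covering $0$, and this possibility must be excluded before the minimal-radius formula can pin down $x_1+\rho_1$. The remaining delicacy is the handling of half-open intervals and the attainment of the supremum $s$, which is exactly why one records that $C$ is closed. The order in which the leftmost and rightmost sensors are shrunk is immaterial, since each shrink only removes coverage and so only shrinks the sets playing the role of $C$ above.
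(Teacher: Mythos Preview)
The paper states this observation without proof, so there is no argument to compare your proposal against. Your proposal is correct and supplies a complete justification.

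All of the ingredients are sound. Irredundancy of the survivors of the greedy scan follows exactly as you say: the family present when sensor~$i$ is examined is a superset of the final survivors, so if the final survivors other than~$i$ already covered $[0,1]$, sensor~$i$ would have been deleted. Private points persist under the shrinking step because shrinking only removes coverage, and the minimal post-shrink radius is indeed $\max\{x_1-\inf([0,1]\setminus C),\ \sup([0,1]\setminus C)-x_1\}$ since $[x_1-\rho_1,x_1+\rho_1]$ is an interval. Your remark about the order of the two shrinks is also correct: if $\rho_1$ is already minimal with respect to a larger complement~$C'$ (sensor~$m$ not yet shrunk), it remains minimal with respect to any $C\subseteq C'$, since a smaller $\rho_1'$ that worked with~$C$ would also work with~$C'$. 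Step~(1), which you rightly identify as the crux, is handled cleanly: if sensor~$1$ failed to reach~$0$, the sensor~$k\neq 1$ covering~$0$ would have range containing $[0,2x_k]\supseteq[0,2x_1]$, swallowing sensor~$1$'s private point $p_1\in[0,x_1+\rho_1)\subseteq[0,2x_1)$.
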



For now, it is important to note only that $\RR(x^j,b^j) = \sum_{x_i
\in x^j} \frac{b^j_i}{r_i}$ is the \rr lifetime of the $j$th strip,
which is a specific \srsc instance $I_j$ with the properties outlined
above.

\subsection{Reduction to Set Radius Strip Cover with Uniform Batteries}

Given the \srsc instance $I_j = (x^j,b^j)$ and a solution $\rho^j$, we
construct an instance $I'_j = (y^j,\mathbf{1})$ with unit size
batteries and a \srsc solution $\sigma^j$, such that the lifetime of
$\sigma^j$ is $T_j$. That is, the optimal lifetime of $I_j'$ is
exactly the same as for $I_j$, but it uses only unit batteries.

Let $\opt_0$ denote the optimal \srsc lifetime.  We assume that $b^j_i
\in \naturals$ and $b^j_i \geq 3$ for every $i$, since
\begin{inparaenum}[(i)]
\item $b^j_i \in \rationals$ for every $i$,
\item $\opt_0(x,\beta b) = \beta \cdot \opt_0(x,b)$, and
\item $\RR(x, \beta b) = \beta \cdot  \RR(x,b)$.
\end{inparaenum}

The instance $I'_j$ is constructed as follows.  We replace each sensor
$i$ such that $x^j_i \in x^j$ with $b^j_i$ unit battery sensors whose
average location is $x_i$.  These unit battery sensors are called the
\emph{children} of $i$.  To do this, we divide the interval $[x^j_i -
\rho^j_i, x^j_i + \rho^j_i]$ into $b^j_i$ equal sub-intervals, and
place a unit battery sensor in the middle of each sub-interval.
Observe that child sensors may be placed outside $[0,1]$, namely to
the left of $0$ or to the right of $1$.
The solution $\sigma^j$ is defined as follows.  For any child $k$ of a
sensor $i$ in $I_j$, we set $\sigma^j_k = \rho^j_i/b^j_i$.  An example
is shown in Figure~\ref{fig:strip_split}.

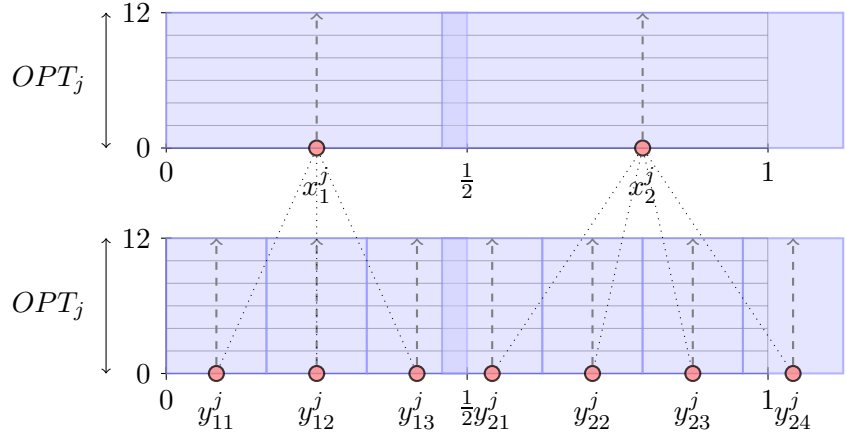
\begin{figure}[t]
\centering		
\begin{tikzpicture}[domain=0:1,xscale=8, yscale=0.3
		,coverage/.style={circle,draw=blue!50,fill=blue!20,thick,opacity=0.5}
		,lifetime/.style={->,dashed,thick,gray}
		,sensor/.style={circle,draw=black,fill=red!50,thick,opacity=0.75,inner sep=0pt,minimum size=2mm}
		,moved/.style={circle,draw=black,fill=blue!50,thick,opacity=0.75,inner sep=0pt,minimum size=2mm}]
	\draw[step=1cm,gray,very thin] (0,0) grid (1, 6);	
	\draw[step=1cm,gray,very thin] (0,10) grid (1, 16);	
	\draw[-] (0,0) -- (1,0) coordinate (x axis);
		\foreach \x/\xtext in {0, .5/\frac{1}{2}, 1}
			\draw (\x, 8pt) -- (\x, -8pt) node[anchor=north] {$\xtext$};
	\draw[-] (0,10) -- (1,10) coordinate (x axis);
		\foreach \x/\xtext in {0, .5/\frac{1}{2}, 1}
			\draw (\x, 10.2) -- (\x, 9.8) node[anchor=north] {$\xtext$};
	\draw[-] (0,0) -- (0,6) coordinate (y axis);
		\foreach \y/\ytext in {0, 6/12, 10/0, 16/12}
			\draw (0.25pt,\y) -- (-0.25pt,\y) node[anchor=east] {$\ytext$};
	\node [] at (-0.1, 3) [label=left:$OPT_j$] {};
	\node [] at (-0.1, 13) [label=left:$OPT_j$] {};
	\draw [<->] (-0.1,0) -- (-0.1,6);
	\draw [<->] (-0.1,10) -- (-0.1,16);
	\draw [coverage] (0,0) rectangle (1/6, 6);
	\draw [coverage] (1/6,0) rectangle (2/6, 6);
	\draw [coverage] (2/6,0) rectangle (3/6, 6);
	\draw [coverage] (11/24,0) rectangle (15/24, 6);
	\draw [coverage] (15/24,0) rectangle (19/24, 6);
	\draw [coverage] (19/24,0) rectangle (23/24, 6);
	\draw [coverage] (23/24,0) rectangle (27/24, 6);
	\draw [coverage] (0,10) rectangle (1/2, 16);
	\draw [coverage] (11/24,10) rectangle (27/24, 16);
	\node [sensor] (si11) at (1/12, 0) [label=below:$y_{11}^j$] {};
	\node [sensor] (si12) at (3/12, 0) [label=below:$y_{12}^j$] {};
	\node [sensor] (si13) at (5/12, 0) [label=below:$y_{13}^j$] {};
	\node [sensor] (si21) at (13/24, 0) [label=below:$y_{21}^j$] {};
	\node [sensor] (si22) at (17/24, 0) [label=below:$y_{22}^j$] {};
	\node [sensor] (si23) at (21/24, 0) [label=below:$y_{23}^j$] {};
	\node [sensor] (si24) at (25/24, 0) [label=below:$y_{24}^j$] {};
	\draw [lifetime] (si11) -- (1/12,6);
	\draw [lifetime] (si12) -- (3/12,6);
	\draw [lifetime] (si13) -- (5/12,6);
	\draw [lifetime] (si21) -- (13/24,6);
	\draw [lifetime] (si22) -- (17/24,6);
	\draw [lifetime] (si23) -- (21/24,6);
	\draw [lifetime] (si24) -- (25/24,6);
	\node [sensor] (si1) at (1/4, 10) [label=below:$x_{1}^j$] {};
	\node [sensor] (si2) at (19/24, 10) [label=below:$x_{2}^j$] {};
	\draw [lifetime] (si1) -- (1/4,16);
	\draw [lifetime] (si2) -- (19/24,16);
	\draw [dotted] (si1) -- (si11);
	\draw [dotted] (si1) -- (si12);
	\draw [dotted] (si1) -- (si13);
	\draw [dotted] (si2) -- (si21);
	\draw [dotted] (si2) -- (si22);
	\draw [dotted] (si2) -- (si23);
	\draw [dotted] (si2) -- (si24);
\end{tikzpicture}
\caption{Reduction of a non-uniform battery strip $I_j$ to a uniform 
battery instance $I_j'$: At the top, $I_j = \left( (\frac{1}{4},
\frac{19}{24}), (3,4) \right)$, while at the bottom, $I_j' = \left(
(\frac{1}{12}, \frac{3}{12}, \frac{5}{12}, \frac{13}{24},
\frac{17}{24}, \frac{21}{24}, \frac{25}{24} ), \mathbf{1}
\right)$.}
\label{fig:strip_split}
\end{figure}


\begin{lemma}
\label{lemma:unit-sol}
The lifetime of $\sigma^j$ is $T_j$.
\end{lemma}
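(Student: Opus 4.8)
The plan is a direct verification: show that the $b^j_i$ children of each sensor $i$ reproduce, in both space and time, exactly the coverage that sensor $i$ contributes under $\rho^j$. Summing over $i$ then shows that at every instant $t$ the region of $U$ covered by $\sigma^j$ in $I'_j$ coincides with the region covered by $\rho^j$ in $I_j$, so the two solutions have the same lifetime, which is $T_j$.

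First I would pin down the geometry of the children. The interval $[x^j_i-\rho^j_i,\,x^j_i+\rho^j_i]$ has length $2\rho^j_i$, so when it is split into $b^j_i$ equal pieces the $k$-th piece is $[\,x^j_i-\rho^j_i+(k-1)\tfrac{2\rho^j_i}{b^j_i},\ x^j_i-\rho^j_i+k\tfrac{2\rho^j_i}{b^j_i}\,]$, whose midpoint is $x^j_i-\rho^j_i+(2k-1)\tfrac{\rho^j_i}{b^j_i}$ and whose half-length is $\tfrac{\rho^j_i}{b^j_i}$. Thus the child placed at that midpoint with radius $\sigma^j = \rho^j_i/b^j_i$ covers precisely the $k$-th piece, and $\bigcup_{k=1}^{b^j_i}[\,y^j_k-\sigma^j_k,\,y^j_k+\sigma^j_k\,]=[x^j_i-\rho^j_i,\,x^j_i+\rho^j_i]$, the coverage interval of sensor $i$ itself. (Integrality of $b^j_i$, which we may assume, is what makes ``the $b^j_i$ children'' meaningful; some children of the leftmost or rightmost sensor may fall outside $[0,1]$, but this is harmless since only coverage of $[0,1]$ matters.)

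Next I would match the timings. Each child of $i$ has unit battery and radius $\rho^j_i/b^j_i$, hence it is active on $[0,\,b^j_i/\rho^j_i]$ — exactly the window on which sensor $i$ is active in the \srsc solution $\rho^j$, since in \srsc every sensor is switched on at time $0$ and sensor $i$ has battery $b^j_i$ and radius $\rho^j_i$ (after the modification of $I_j$ this common window is $[0,T_j]$, as $b^j_i=\rho^j_i T_j$ for every surviving sensor). Combining the two paragraphs, for every $t$ the set of active children of $i$ is all of them exactly when sensor $i$ is active at time $t$, so the region covered by $\sigma^j$ at time $t$ equals $\bigcup\{[x^j_i-\rho^j_i,x^j_i+\rho^j_i] : \text{sensor } i \text{ active at } t\}$, which is the region covered by $\rho^j$ at time $t$. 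Since lifetime is determined pointwise in $t$, $\sigma^j$ has the same lifetime as $\rho^j$, namely $T_j$.

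I do not expect a genuine obstacle here: the argument is elementary. The only points requiring care are (i) getting the child locations and radii right so that their intervals tile $[x^j_i-\rho^j_i,x^j_i+\rho^j_i]$ with no gaps and no overlaps — the routine arithmetic above — and (ii) observing that ``identical covered region at every $t$'' immediately yields ``identical lifetime,'' with the boundary instant $t=T_j$ being a measure-zero issue that can be ignored (or dispatched by whatever convention the paper adopts for when a just-exhausted sensor ceases to cover).
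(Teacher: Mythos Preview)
Your proposal is correct and follows essentially the same approach as the paper's proof: verify that the $b^j_i$ children of sensor $i$ tile the interval $[x^j_i-\rho^j_i,\,x^j_i+\rho^j_i]$, and that each child survives $1/\sigma^j_k = b^j_i/\rho^j_i = T_j$ time units. The paper's proof is just two sentences stating these two facts; your explicit arithmetic for the child locations and your framing in terms of ``identical covered region at every $t$'' are more detailed than necessary but entirely sound.
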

\begin{proof}
First, the $b^j_i$ children of a sensor $i$ in $I_j$ cover the
interval $[x^j_i-\rho^j_i, x^j_i + \rho^j_i]$.  Also, a child $k$ of
$i$ survives $1/\sigma^j_k = b^j_i/\rho^j_i = T_j$ time units.
\end{proof}

Next, we prove that the lower bound on the performance of \rr may only
decrease.

\begin{lemma}
\label{lemma:unit-convex}
$\RR'(y^j,\mathbf{1}) \leq \RR'(x^j,b^j)$.
\end{lemma}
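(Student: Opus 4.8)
The plan is to write both $\RR'$ quantities in closed form and then reduce the inequality to a single application of Jensen's inequality to the convex map $u \mapsto \max\{u,1-u\}$. Unwinding the definition, for any instance $(x,b)$ we have $\RR'(x,b) = B/\overline{r} = (\sum_i b_i)^2 / \sum_i b_i r_i$, where $r_i = \max\{x_i,1-x_i\}$. The first step is to note that the total battery of $I'_j = (y^j,\mathbf{1})$ is the number of children, which equals $\sum_i b^j_i = B$, the total battery of $I_j$. Hence, writing $r(u) \eqdf \max\{u,1-u\}$,
\[
\RR'(y^j,\mathbf{1}) = \frac{B^2}{\sum_k r(y^j_k)}
\qquad\text{and}\qquad
\RR'(x^j,b^j) = \frac{B^2}{\sum_i b^j_i\, r(x^j_i)}\,,
\]
so it suffices to prove $\sum_k r(y^j_k) \geq \sum_i b^j_i\, r(x^j_i)$; that is, the sum of the $r$-values of the children is at least the $b^j$-weighted sum of the $r$-values of their parents.

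The second, main step is the observation that $r(u) = \tfrac{1}{2} + |u - \tfrac{1}{2}|$ is convex on all of $\reals$ — crucially it stays convex at child locations lying outside $[0,1]$, which matters because the construction may place children to the left of $0$ or to the right of $1$. Since, by construction, the $b^j_i$ children of sensor $i$ have mean location exactly $x^j_i$, grouping the children by parent and applying Jensen's inequality within each group gives
\[
\sum_k r(y^j_k)
= \sum_i \sum_{k\text{ child of }i} r(y^j_k)
\geq \sum_i b^j_i\cdot r\!\left(\frac{1}{b^j_i}\sum_{k\text{ child of }i} y^j_k\right)
= \sum_i b^j_i\cdot r(x^j_i)\,,
\]
which is exactly what we need; inverting and multiplying through by $B^2$ yields $\RR'(y^j,\mathbf{1})\leq \RR'(x^j,b^j)$.

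There is no real obstacle here beyond bookkeeping: one must check that the total batteries of $I_j$ and $I'_j$ coincide (so the numerators cancel) and that $r$ is convex even outside $[0,1]$ (so Jensen applies to every child group, regardless of where its members sit). Everything else is the same convexity reasoning already used for Lemma~\ref{lemma:convex}, combined with the defining property of the children — that their average location is that of their parent.
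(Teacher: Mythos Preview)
Your proof is correct and follows essentially the same route as the paper: both arguments reduce the claim to showing that for each parent $i$, the sum of the children's $r$-values is at least $b^j_i\,r(x^j_i)$, using that the children's mean location equals $x^j_i$. The only cosmetic difference is that the paper bounds $\max\{y^j_k,1-y^j_k\}$ below by whichever linear piece ($y^j_k$ or $1-y^j_k$) matches $r$ at the parent's location and then sums, whereas you package that same step as Jensen's inequality for the convex function $r(u)=\tfrac{1}{2}+|u-\tfrac{1}{2}|$.
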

\begin{proof}
Let $p^j$ be the \rr radii of $y^j$.
Observe that if $x^j_i \leq \half$, it follows that
\[
\sum_{k:k \text{ correspond to } i} \!\!\!\!\!\! p^j_k 
=    \sum_{k:k \text{ correspond to } i} \!\!\!\!\!\! \max\set{y^j_k,1-y^j_k}
\geq \sum_{k:k \text{ correspond to } i} \!\!\!\!\!\! (1-y^j_k)
~=~  b^j_i (1-x^j_i)
~=~  b^j_i r^j_i
~,
\]
and that if $x^j_i \geq \half$, we have that
\[
\sum_{k:k \text{ correspond to } i} \!\!\!\!\!\! p^j_k 
=    \sum_{k:k \text{ correspond to } i} \!\!\!\!\!\! \max\set{y^j_k,1-y^j_k}
\geq \sum_{k:k \text{ correspond to } i} \!\!\!\!\!\! y^j_k
~=~  b^j_i x^j_i
~=~  b^j_i r^j_i
~.
\]
Hence, 
\[
\RR'(y^j, \mathbf{1})
=    \frac{\sum_i b^j_i}{\overline{p^j}}
=    \frac{B^j}{\frac{1}{B^j} \sum_{k} p_k^j}
\leq \frac{B^j}{\frac{1}{B^j} \sum_{i} b^j_i r^j_i}
=    \frac{B^j}{\overline{r^j}}
=    \RR'(x^j, b^j) 
~,
\]
and the lemma follows.
\end{proof}


\subsection{Analysis of Round Robin for Unit Batteries}

For the remainder of this section, we assume that we are given a unit
battery instance $x$ that corresponds to the $j$th strip.  (We drop
the subscript $j$ and go back to $x$ for readability.)  Recall that $x
\cap [0,1]$ is not necessarily equal to $x$, since some children 
could have been created outside $[0,1]$ in the previous step.  We show
that $\RR'(x) \geq \frac{2}{3} \opt_0(x)$.

Let $i_0 = \min \set{i : x_i \geq 0}$ and let $i_1 = \max \set{i : x_i
\leq 1}$ be the indices of the left-most and right-most sensors in
$[0,1]$, respectively.  

\begin{lemma}
\label{lemma:Delta}
$\max_{i \in \{i_0,\ldots,i_1-1\}} \{x_{i+1}-x_i\}
= \max_{i \in \{1,\ldots,n-1\}} \{x_{i+1}-x_i\}$.
\end{lemma}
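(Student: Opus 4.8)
The plan is to establish the nontrivial inequality $\ge$ (the reverse being immediate) by showing that the overall maximum gap between consecutive sensors is attained by a pair both lying in $[0,1]$. If every sensor already lies in $[0,1]$ then $i_0=1$ and $i_1=n$, and there is nothing to prove; so assume some sensor lies strictly left of $0$, the situation at the right end being symmetric.

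The heart of the argument is to describe the sensors lying outside $[0,1]$. A sensor can lie left of $0$ only if it is a child of a strip sensor whose range extends past $0$. First I would show that at most one strip sensor has this property and that it is the leftmost one, $p$: if two strip sensors extended past $0$, the one covering less of $[0,1]$ would have its $[0,1]$-coverage contained in that of the other, and hence would have been discarded by the redundancy-removal step; and if the leftmost strip sensor did not extend past $0$, then Observation~\ref{obs:annoying} would force its range to be $[0,2x^j_1]$, which is contained in the coverage of whichever strip sensor covers $0$, so again one of the two would be redundant. Either possibility contradicts the fact that after cleanup the strip instance has no redundant sensor. Hence all sensors left of $0$ are children of $p$. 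Now $p$ has $b^j_p\ge 3$ children, equally spaced at distance $g_L:=2\rho^j_p/b^j_p$ across the interval $[x^j_p-\rho^j_p,\,R]$ with $R:=x^j_p+\rho^j_p$. If $R>1$ then $p$ alone covers $[0,1]$, the other strip sensors are redundant, every sensor is an equally spaced child of $p$, and the claim is immediate; so assume $R\le 1$. A short computation using $x^j_p\ge 0$ shows that at least two children of $p$ have nonnegative coordinate. Finally, since $p$ extends past $0$, Observation~\ref{obs:annoying} says that $[0,R)$ is covered by $p$ alone, hence in the unit-battery instance by children of $p$ alone; so no other sensor has coordinate below $R$. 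Therefore the sensors with coordinate below $R$ are exactly the children of $p$, and they form the prefix $x_1<\cdots<x_{b^j_p}$ of the sorted list.

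With this structure in hand the conclusion is routine. Every gap $x_{i+1}-x_i$ with $i\notin\{i_0,\dots,i_1-1\}$ is either a gap strictly left of $0$ (which is exactly $g_L$, being a gap between two consecutive children of $p$), the straddling gap $x_{i_0}-x_{i_0-1}$ (also exactly $g_L$, since $x_{i_0-1}$ and $x_{i_0}$ are the children of $p$ immediately on either side of $0$), or one of the symmetric objects at the right end, which are bounded by $g_R$; so every such gap is at most $\max\{g_L,g_R\}$. On the other hand, $x_{i_0}$ and $x_{i_0+1}$ are the two leftmost children of $p$ with nonnegative coordinate, they are at distance $g_L$, and both lie in $[0,R]\subseteq[0,1]$ so both indices are in $\{i_0,\dots,i_1\}$; symmetrically $g_R$ is realized by an interior pair. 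Hence $\max\{g_L,g_R\}$ is realized by an interior gap, and the maximum over all gaps equals the maximum over the interior gaps, as claimed. I expect the structural step — pinning down that the sensors outside $[0,1]$ are exactly the equally spaced children of the leftmost (respectively rightmost) strip sensor, and disposing of the boundary cases — to be the main obstacle; the comparison of gaps afterwards is mechanical.
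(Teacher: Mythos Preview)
Your proof is correct and follows essentially the same approach as the paper's: both use Observation~\ref{obs:annoying} together with the fact that $b^j_i \ge 3$ to conclude that any children lying outside $[0,1]$ come from the leftmost (resp.\ rightmost) strip sensor, are equally spaced, and that this same spacing is realised by at least one consecutive pair of children inside $[0,1]$. The paper compresses this into three sentences and leaves implicit the verification that no \emph{other} strip sensor can extend past $0$ (which you spell out via the redundancy-removal step) and the check that at least two children land in $[0,1]$; your version simply makes these routine points explicit.
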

\begin{proof}
By Observation~\ref{obs:annoying} either $\rho_1 = x_1$ and hence none
of its children are located to the left of $0$, or the points to the
left of $x_1+\rho_1$ are only covered by sensor $1$, which means that
the gaps between $1$'s children to the left of zero also appears
between its children within $[0,1]$.  (Recall that $b^j_i \geq 3$, for
all $i$.)  The same argument can be used for the right-most sensor.
\end{proof}

As illustrated in Figure~\ref{fig:delta_definition}, we define
\begin{align*}
\Delta_0 &\eqdf 
\begin{cases}
x_{i_0}-x_{i_0-1} & i_0 > 1, -x_{i_0-1} < x_{i_0}, \\
2x_{i_0}          & \text{otherwise},
\end{cases}
\end{align*}
\begin{align*}
\Delta_1 & \eqdf   
\begin{cases}
x_{i_1+1}-x_{i_1} & i_1 < n, x_{i_1+1}-1 < 1-x_{i_1}, \\
2(1-x_{i_1})      & \text{otherwise,}
\end{cases}
\end{align*}
and
\[
\Delta \eqdf 
       \max \set{ \Delta_0, \Delta_1, \max_{i \in \{i_0,\ldots,i_1-1\}} \{x_{i+1}-x_i\} }
~.
\]
%

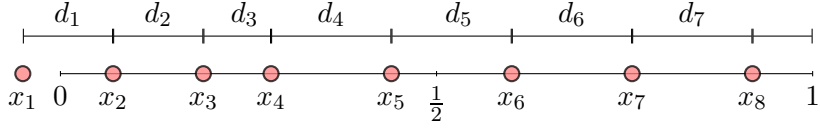
\begin{figure}[t]
\centering		
\begin{tikzpicture}[domain=0:1,xscale=10, yscale=1
,sensor/.style={circle,draw=black,fill=red!50,thick,opacity=0.75,inner sep=0pt,minimum size=2mm}
,moved/.style={circle,draw=black,fill=blue!50,thick,opacity=0.75,inner sep=0pt,minimum size=2mm}]
\draw[-] (0,0) -- (1,0) coordinate (x axis);
	\foreach \x/\xtext in {0, 0.5/\frac{1}{2}, 1}
		\draw (\x,1pt) -- (\x,-1pt) node[anchor=north] {$\xtext$};
\node [sensor] (s0) at (-0.05, 0) [label=below:$x_1$] {};
\node [sensor] (s1) at (0.07, 0) [label=below:$x_2$] {};
\node [sensor] (s2) at (0.19, 0) [label=below:$x_3$] {};
\node [sensor] (s3) at (0.28, 0) [label=below:$x_4$] {};
\node [sensor] (s4) at (0.44, 0) [label=below:$x_5$] {};
\node [sensor] (s5) at (0.6, 0) [label=below:$x_6$] {};
\node [sensor] (s6) at (0.76, 0) [label=below:$x_7$] {};
\node [sensor] (s7) at (0.92, 0) [label=below:$x_8$] {};
\draw[|-|] (-0.05, 1/2) -- (0.07, 1/2) node[above] at (0.01,1/2) {$d_1$};
\draw[|-|] (0.07, 1/2) -- (0.19, 1/2) node[above] at (0.13,1/2) {$d_2$};
\draw[|-|] (0.19, 1/2) -- (0.28, 1/2) node[above] at (0.245,1/2) {$d_3$};
\draw[|-|] (0.28, 1/2) -- (0.44, 1/2) node[above] at (0.36,1/2) {$d_4$};
\draw[|-|] (0.44, 1/2) -- (0.6, 1/2) node[above] at (0.53,1/2) {$d_5$};
\draw[|-|] (0.6, 1/2) -- (0.76, 1/2) node[above] at (0.68,1/2) {$d_6$};
\draw[|-|] (0.76, 1/2) -- (0.92, 1/2) node[above] at (0.84,1/2) {$d_7$};
\draw[|-|] (0.92, 1/2) -- (1, 1/2);
\end{tikzpicture}
\caption{Illustration of the gaps in a unit battery instance $x$. 
Note that $i_0 = 2$ and $i_1 = 8$.  $\Delta_0 = d_1$, since sensor $1$
is closer to $0$ than sensor $2$.  Also, $\Delta_1 = 2(1-x_8)$.  Hence,
$\Delta = \max\set{d_4,d_1,2(1-x_8)}$.}
\label{fig:delta_definition}
\end{figure}

We describe the optimal \srsc lifetime in terms of $\Delta$.

\begin{lemma}
\label{lemma:unit-opt}
The optimum lifetime of $x$ is $\frac{2}{\Delta}$.
\end{lemma}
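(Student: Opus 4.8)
The plan is to establish the two inequalities $\opt_0(x) \le \frac{2}{\Delta}$ and $\opt_0(x) \ge \frac{2}{\Delta}$ separately. For the upper bound, I would argue that $\Delta$ identifies a ``bottleneck'' location in $[0,1]$ that every feasible radial assignment must keep covered. Concretely, if $\Delta = x_{i+1} - x_i$ for some consecutive pair in $\{i_0,\dots,i_1-1\}$, then consider the midpoint $p = (x_i + x_{i+1})/2$. Any sensor that covers $p$ at a given time must have radius at least $\Delta/2$ (the nearest sensors on either side are $x_i$ and $x_{i+1}$, each at distance $\Delta/2$), so it survives for at most $1/(\Delta/2) = 2/\Delta$ time units; since only one sensor need be charged with covering $p$ at any instant, but the total coverage time of $p$ summed over sensors is $\sum_i 1/\rho_i$ restricted to those reaching $p$... — more carefully, the right statement is that the point $p$ can be covered for at most $2/\Delta$ time total, because each unit-battery sensor that ever covers $p$ contributes at most $2/\Delta$ and — no: I need the sharper observation that \emph{in any schedule where $[0,1]$ is covered up to time $T$}, the point $p$ is covered throughout $[0,T]$, and the sensors covering $p$ at distinct times are disjoint in their battery usage, so $T \le \sum_{i : p \in \text{range}} 1/\rho_i$; but each such $\rho_i \ge \Delta/2$ forces... this still allows many sensors. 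The correct bottleneck argument is: among the sensors whose range can possibly contain $p$, only $x_i$ and $x_{i+1}$ are candidates (all others are at distance $> \Delta/2$ on one side, hence $\ge \Delta/2$... no). The cleanest route is: the \emph{only} sensors that can cover $p$ with radius $< $ anything reasonable are $x_i$ and $x_{i+1}$ themselves when $\Delta$ is the max gap — actually every sensor is at distance $\ge \Delta/2$ from \emph{some} point in each gap-interval, and by a counting/pigeonhole over the $\lceil \ell \rceil$-many disjoint half-gap segments one shows the total lifetime is bounded; I will treat making this bottleneck argument rigorous as the main obstacle (see below). The $\Delta_0, \Delta_1$ cases are analogous, using the boundary points $0$ and $1$ (or the midpoint between $0$ and $x_{i_0}$ when $\Delta_0 = 2x_{i_0}$), invoking Observation~\ref{obs:annoying} to handle children placed outside $[0,1]$, and Lemma~\ref{lemma:Delta} to ensure no interior gap was overlooked.

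For the lower bound $\opt_0(x) \ge \frac{2}{\Delta}$, I would exhibit an explicit radial assignment achieving lifetime $\frac{2}{\Delta}$. Set each sensor's radius to exactly $\rho_i = \Delta/2$. I claim this covers $[0,1]$: the definition of $\Delta$ guarantees consecutive interior sensors are within $\Delta$ of each other, so their radius-$\Delta/2$ ranges overlap or touch; the $\Delta_0$ condition guarantees the leftmost sensor $x_{i_0}$ (together with any child to its left, or by $2x_{i_0} \le \Delta$) reaches $0$; symmetrically $\Delta_1$ handles the right end. Each unit-battery sensor then survives $1/\rho_i = 2/\Delta$ time units, and since $\tau_i = 0$ for all $i$ in \srsc, all sensors are simultaneously active on $[0, 2/\Delta]$, giving lifetime exactly $\frac{2}{\Delta}$.

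The main obstacle is the upper bound. The subtlety is that a schedule need not use a single bottleneck sensor — it can rotate many sensors through a given point — so I must show that the \emph{total} time $[0,1]$ can be covered is capped by $2/\Delta$. I expect the right argument is: let $m$ be an index realizing $\Delta$ (whether via an interior gap or a boundary term), and consider the open interval $J$ of length $\Delta$ associated with it (e.g.\ $(x_i, x_{i+1})$). For any point $u \in J$, every sensor $k$ with $u$ in its range has $\rho_k \ge \min\{|u - x_k|\}$ over $k$; integrating the covering condition $1 = \text{indicator}$ against battery consumption, or more simply picking the particular point $u^\star \in J$ that is farthest from all sensors — which is at distance exactly $\Delta/2$ from its nearest sensor when $J$ is a maximal gap — we get that \emph{every} sensor covering $u^\star$ has radius $\ge \Delta/2$, hence lifetime $\le 2/\Delta$ \emph{each}, but the key extra fact is that at $u^\star$ these radius-$\ge\Delta/2$ disks, being centered at the two nearest sensors only, can be stacked to total at most... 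I will instead argue via Observation~\ref{obs:lifetime}-style area accounting localized to $J$, or cite that covering the midpoint requires, at every instant $t \in [0,T]$, at least one sensor of radius $\ge \Delta/2$ active, and that a unit-battery sensor of radius $\ge \Delta/2$ is active for a time interval of length $\le 2/\Delta$, so $T \le 2/\Delta$ would follow \emph{if} one sensor sufficed — and when several do, the disjointness of a single sensor's active interval plus the fact that all relevant sensors lie among a bounded set forces the bound; pinning down exactly which finite argument the authors intend is the crux, and I would expect it to rest on the observation that the midpoint of the widest gap can be reached only by the two sensors flanking that gap, each with radius forced to be $\ge \Delta/2$, so that $T \le 1/\rho_i + 1/\rho_{i+1}$ is \emph{not} the bound — rather, since \emph{all} of $[0,1]$ must be covered, a parallel constraint holds at every gap midpoint simultaneously, and summing or taking the binding one yields $\frac{2}{\Delta}$.
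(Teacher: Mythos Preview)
Your lower-bound half is exactly the paper's: set every $\rho_i = \Delta/2$, check coverage via the definition of $\Delta$, and note each unit-battery sensor survives $2/\Delta$ time units.

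Your upper-bound half has all the right ingredients but a genuine gap: you have forgotten, in that part of the argument, that this is a \srsc instance. All activation times are zero, so there is \emph{no rotation}. Every sensor $k$ is active precisely on the interval $[0,\,1/\rho_k]$, and these intervals are nested, not staggered. Once you remember this, your own observation finishes the proof in one line: you already noted that the bottleneck point $u^\star$ (the midpoint of the width-$\Delta$ gap, or the relevant boundary point when $\Delta \in \{\Delta_0,\Delta_1\}$) is at distance at least $\Delta/2$ from \emph{every} sensor, so any sensor whose range contains $u^\star$ has $\rho_k \ge \Delta/2$ and hence dies by time $1/\rho_k \le 2/\Delta$. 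Since all such sensors start at time $0$, the point $u^\star$ is uncovered at every time $t > 2/\Delta$, so the lifetime is at most $2/\Delta$.

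The paper phrases this contrapositively: assume the lifetime exceeds $2/\Delta$; then at some instant beyond $2/\Delta$ the surviving sensors all satisfy $\rho_i < \Delta/2$, and such sensors cannot cover $u^\star$ --- contradiction. Either way, the ``summing coverage times,'' the worry that ``$T \le 1/\rho_i + 1/\rho_{i+1}$ is not the bound,'' and the area-accounting localized to $J$ are all unnecessary detours caused by implicitly treating the problem as \sosc rather than \srsc.
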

\begin{proof}
To verify that $2/\Delta$ can be achieved, consider the solution in
which $\rho_i = \Delta/2$ for all $i$.  Clearly, $[0,1]$ is covered,
and all sensors die after $2/\Delta$ time units.  Now suppose that a
solution $\rho$ exists with lifetime strictly greater than $2/\Delta$.
Hence $\max_i \{\rho_i\} < \Delta /2$.  By definition, $\Delta$ must
equal $\Delta_0$, $\Delta_1$, or the maximum internal gap.  If the
latter, then there exists a point $u \in [0,1]$ between the two
sensors forming the maximum internal gap that is uncovered. On the
other hand, if $\Delta = \Delta_0$, then if $\Delta_0 = 2x_{i_0}$, 0
is uncovered, and otherwise, there is a point in $[0,x_{i_0}]$ that is
uncovered.  A similar argument holds if $\Delta = \Delta_1$.
\end{proof}

In the next definition we transform $x$ into an instance $x'$ by
pushing sensors away from $\half$, so that each internal gap between
sensors is of equal width.
See Figure~\ref{fig:stretching} for an illustration. 

\begin{definition}
For a given instance $x$, let $k$ be a sensor whose location is
closest to $1/2$.  Then we define the \emph{stretched} instance $x'$
of $x$ as follows:
\[
x_i' = 
\begin{cases} 
(1-r_k) - (\ceil{n/2}-i) \Delta & i \leq \ceil{n/2}, \\
(1-r_k) + (i-\ceil{n/2}) \Delta & i > \ceil{n/2}. \\
\end{cases}
\] 
\end{definition}	

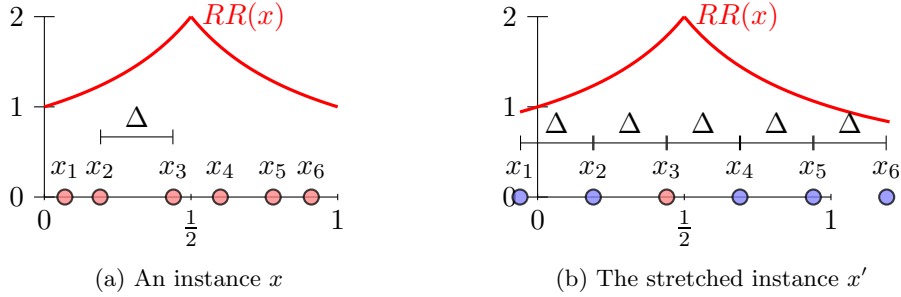
\begin{figure}[t]
\centering
\subfloat[An instance $x$]{%
\begin{tikzpicture}[domain=0:1,xscale=3.9, yscale=1.2
		,sensor/.style={circle,draw=black,fill=red!50,thick,opacity=0.75,inner sep=0pt,minimum size=2mm}
		,moved/.style={circle,draw=black,fill=blue!50,thick,opacity=0.75,inner sep=0pt,minimum size=2mm}]
	\draw[-] (0,0) -- (1,0) coordinate (x axis);
		\foreach \x/\xtext in {0, 0.5/\frac{1}{2}, 1}
			\draw (\x,1pt) -- (\x,-1pt) node[anchor=north] {$\xtext$};
	\draw[-] (0,0) -- (0,2) coordinate (y axis);
		\foreach \y/\ytext in {0, 1, 2}
			\draw (1pt,\y) -- (-1pt,\y) node[anchor=east] {$\ytext$};
	\draw[color=red,very thick,domain=0:0.5] plot (\x,{1/(1-\x)}) node[right] {${RR}(x)$};
	\draw[color=red,very thick,domain=0.5:1] plot (\x,1/\x) node[right] {};
	\def\a{0.44}
	\def\d{0.25}
	\def\dd{0.125}
	\node [sensor] (s1) at (0.07, 0) [label=above:$x_1$] {};
	\node [sensor] (s2) at (0.19, 0) [label=above:$x_2$] {};
	\node [sensor] (s3) at (\a, 0) [label=above:$x_3$] {};
	\node [sensor] (s4) at (0.6, 0) [label=above:$x_4$] {};
	\node [sensor] (s5) at (0.78, 0) [label=above:$x_5$] {};
	\node [sensor] (s6) at (0.91, 0) [label=above:$x_6$] {};
	\draw[|-|] (\a, 2/3) -- (\a - \d, 2/3) node[above] at (\a - \dd,2/3) {$\Delta$};
\end{tikzpicture} 	
}
\hspace{40pt}
\subfloat[The stretched instance $x'$]{%
\begin{tikzpicture}[domain=0:1,xscale=3.9, yscale=1.2
		,sensor/.style={circle,draw=black,fill=red!50,thick,opacity=0.75,inner sep=0pt,minimum size=2mm}
		,moved/.style={circle,draw=black,fill=blue!50,thick,opacity=0.75,inner sep=0pt,minimum size=2mm}]
	\draw[-] (0,0) -- (1,0) coordinate (x axis);
		\foreach \x/\xtext in {0, 0.5/\frac{1}{2}, 1}
			\draw (\x,1pt) -- (\x,-1pt) node[anchor=north] {$\xtext$};
	\draw[-] (0,0) -- (0,2) coordinate (y axis);
		\foreach \y/\ytext in {0, 1, 2}
			\draw (1pt,\y) -- (-1pt,\y) node[anchor=east] {$\ytext$};
	\draw[color=red,very thick,domain=-0.06:0.5] plot (\x,{1/(1-\x)}) node[right] {${RR}(x)$};
	\draw[color=red,very thick,domain=0.5:1.2] plot (\x,1/\x) node[right] {};
	\def\a{0.44}
	\def\d{0.25}
	\def\dd{0.125}
	\def\h{0.6}
	\node [moved] (s1) at (\a - 2*\d, 0) [label=above:$x_1$] {};
	\node [moved] (s2) at (\a - \d, 0) [label=above:$x_2$] {};
	\node [sensor] (s3) at (\a, 0) [label=above:$x_3$] {};
	\node [moved] (s4) at (\a + \d, 0) [label=above:$x_4$] {};
	\node [moved] (s5) at (\a + 2*\d, 0) [label=above:$x_5$] {};
	\node [moved] (s6) at (\a + 3*\d, 0) [label=above:$x_6$] {};
	\draw[|-|] (\a - 2*\d, \h) -- (\a - \d, \h) node[above] at (\a - 3*\dd, \h) {$\Delta$};
	\draw[|-|] (\a, \h) -- (\a - \d, \h) node[above] at (\a - \dd, \h) {$\Delta$};
	\draw[|-|] (\a, \h) -- (\a + \d, \h) node[above] at (\a + \dd, \h) {$\Delta$};
	\draw[|-|] (\a + \d, \h) -- (\a + 2*\d, \h) node[above] at (\a + 3*\dd, \h) {$\Delta$};
	\draw[|-|] (\a + 2*\d, \h) -- (\a + 3*\d, \h) node[above] at (\a + 5*\dd, \h) {$\Delta$};
\end{tikzpicture} 	
}
\vspace{-5pt}
\caption{Transformation of instance $x$ to stretched instance $x'$. The sensor closest to $\half$ ($x_3$) remains in place, while the other sensors are placed at increasing intervals of $\Delta$ away from $x_3$. The \rr lifetime of a sensor is shown as a continuous function of its location $x$.}
\label{fig:stretching}
\end{figure}
	

%

\begin{observation}
Let $x'$ be a stretched instance of $x$.  Then $|\set{i : x'_i \leq
\half}| = \ceil{\frac{n}{2}}$ and $|\set{i : x'_i > \half}| =
\floor{\frac{n}{2}}$.
\end{observation}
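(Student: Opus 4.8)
The plan is to show that the stretched locations form a strictly increasing sequence $x'_1 < \cdots < x'_n$ that crosses $\half$ precisely between indices $\ceil{n/2}$ and $\ceil{n/2}+1$; that is, $x'_{\ceil{n/2}} \le \half < x'_{\ceil{n/2}+1}$. Once this is established the observation is immediate: $\set{i : x'_i \le \half}$ is the initial segment $\set{1,\ldots,\ceil{n/2}}$, which has size $\ceil{n/2}$, and its complement $\set{i : x'_i > \half}$ has size $n - \ceil{n/2} = \floor{n/2}$.

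First I would note that $\Delta > 0$ (otherwise Lemma~\ref{lemma:unit-opt} would give an infinite optimal lifetime, contradicting Observation~\ref{obs:lifetime}), and deduce that $x'$ is strictly increasing: inside each of the two branches of the definition $x'_{i+1} - x'_i = \Delta$, and across the junction $x'_{\ceil{n/2}+1} - x'_{\ceil{n/2}} = \Delta > 0$ as well. So it only remains to locate $\half$ between $x'_{\ceil{n/2}}$ and $x'_{\ceil{n/2}+1}$.

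The key is the elementary identity $r_k - \half = \abs{x_k - \half}$, valid for any real $x_k$ because $r_k = \max\set{x_k, 1-x_k}$; hence $1 - r_k = \half - \abs{x_k - \half}$, so $x'_{\ceil{n/2}} = 1 - r_k \le \half$, while $x'_{\ceil{n/2}+1} = (1-r_k) + \Delta = \half - \abs{x_k - \half} + \Delta$. Thus the whole statement reduces to the single inequality $\abs{x_k - \half} \le \Delta/2$, which would give $x'_{\ceil{n/2}+1} \ge \half + \Delta/2 > \half$. To prove it I would use the explicit optimal solution from Lemma~\ref{lemma:unit-opt}: setting every radius to $\Delta/2$ covers $[0,1]$, so the point $\half \in [0,1]$ lies in the range of some sensor $i$, i.e.\ $\abs{x_i - \half} \le \Delta/2$; since $x_k$ is by definition a sensor whose location is closest to $\half$, we get $\abs{x_k - \half} \le \abs{x_i - \half} \le \Delta/2$.

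I expect the only genuine obstacle to be this last inequality, and the subtlety worth stressing is that the sensor closest to $\half$ need not lie in $[0,1]$ (children can be created outside the interval), which is exactly why the identity $r_k - \half = \abs{x_k-\half}$ and the covering argument above are phrased for an arbitrary real location rather than for $x_k \in [0,1]$. Everything else is bookkeeping: the monotonicity computation, the final count, and the degenerate case $n = 1$ (where $\ceil{n/2} = n$, so there is no index above the cutoff and the claim reduces to $x'_1 = 1 - r_1 \le \half$).
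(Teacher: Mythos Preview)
Your argument is correct. The paper states this observation without proof, presumably treating it as immediate from the definition of a stretched instance; your write-up makes explicit the one non-trivial point, namely that $\abs{x_k-\half}\le\Delta/2$, and your covering argument via Lemma~\ref{lemma:unit-opt} is exactly the right way to see it. There is nothing to compare approaches against, but your proof is a clean justification of what the authors left unsaid.
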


\begin{lemma}
\label{lemma:stretch}
Let $x'$ be the stretched instance of $x$.  Then, $\opt_0(x') =
\opt_0(x)$ and $\RR'(x') \leq \RR'(x)$.
\end{lemma}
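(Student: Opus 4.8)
The plan is to establish the two assertions separately. For the statement about the optimum, recall from Lemma~\ref{lemma:unit-opt} that $\opt_0$ of any instance equals $2/\Delta$ of that instance; writing $\Delta'$ for the gap parameter of $x'$, it therefore suffices to show $\Delta'=\Delta$. Since consecutive sensors of $x'$ are exactly $\Delta$ apart by construction, every internal gap of $x'$ equals $\Delta$, so $\Delta'\ge\Delta$. For the reverse inequality I would check $\Delta_0(x')\le\Delta$ and $\Delta_1(x')\le\Delta$ directly from the formula for $x'$, using that $1-r_k\le\half$, that $\Delta\ge 2(r_k-\half)$ — which follows because $k$ is a sensor of $x$ nearest to $\half$ and $x$ is feasible (it covers $[0,1]$ with radius $\Delta/2$) — and the observation that the branch of the definition of $\Delta_0$ (resp.\ $\Delta_1$) giving the value $2x'_{i_0}$ (resp.\ $2(1-x'_{i_1})$) is taken exactly when that value is at most $\Delta$. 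This gives $\Delta'=\Delta$, hence $\opt_0(x')=\opt_0(x)$.

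For the statement about \rr, I would first rewrite $\RR'$ for unit batteries: since $r_i=\half+|x_i-\half|$, we have $\RR'(x)=n^2/\sum_i r_i=\frac{n^2}{n/2+\sum_i|x_i-\half|}$, and the same identity holds for $x'$; hence $\RR'(x')\le\RR'(x)$ is equivalent to
\[
\sum_{i}\bigl|x'_i-\half\bigr|\;\ge\;\sum_{i}\bigl|x_i-\half\bigr| ,
\]
i.e.\ stretching does not decrease the total distance of the sensors from the midpoint. The plan is to prove this by comparing the \emph{sorted} multisets of distances. Writing $d=\min_i|x_i-\half|=|x_k-\half|$, the sorted distances occurring in $x'$ are $d,\ \Delta-d,\ \Delta+d,\ 2\Delta-d,\ 2\Delta+d,\ \dots$, the ordering using $\Delta\ge 2d$. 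Walking outward from $x_k$ in $x$, consecutive sensors are at most $\Delta$ apart — this holds inside $[0,1]$ by definition of $\Delta$, and \emph{outside} $[0,1]$ by Lemma~\ref{lemma:Delta} together with Observation~\ref{obs:annoying} — and the gap of $x$ straddling $\half$ is at most $\Delta$; these facts let one bound, position by position, the distances of the sensors of $x$ by the displayed sequence. The delicate point is that the numbers of sensors of $x$ on the two sides of $\half$ need not match those of $x'$, so the position-by-position comparison does not go through verbatim; here one must also use $|x_i-\half|\le\half$ for sensors inside $[0,1]$ together with $\Delta_0,\Delta_1\le\Delta$, and — crucially — that each original sensor of the strip produces at least three evenly spaced children ($b^j_i\ge 3$), which excludes the a priori configurations (for instance, all sensors of $x$ bunched on one side of $\half$ with a long tail) for which the plain gap estimates would be too weak.

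The first assertion is routine. The main obstacle is the second, specifically the sorted-distance comparison: it succeeds only after pinning down, via the structural properties of a strip instance (Lemma~\ref{lemma:Delta}, Observation~\ref{obs:annoying}, and $b^j_i\ge 3$), exactly which configurations of $x$ can actually arise — for an unrestricted instance with maximum gap $\Delta$ the inequality $\sum_i|x'_i-\half|\ge\sum_i|x_i-\half|$ can genuinely fail.
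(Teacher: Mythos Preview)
Your treatment of $\opt_0(x')=\opt_0(x)$ matches the paper's: both argue $\Delta'=\Delta$ directly from the construction and then invoke Lemma~\ref{lemma:unit-opt}.

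For $\RR'(x')\le\RR'(x)$ the paper's proof is one sentence: by Lemma~\ref{lemma:Delta} every gap in $x$ (including those involving sensors outside $[0,1]$) is at most $\Delta$, so stretching---which fixes the anchor at $1-r_k$ and replaces each gap by exactly $\Delta$---only pushes sensors farther from $\tfrac12$; hence $\sum_i r'_i\ge\sum_i r_i$ and $\RR'(x')\le\RR'(x)$. You take a genuinely different and heavier route, reducing to a comparison of the \emph{sorted} multisets of distances $|x_i-\tfrac12|$ and then confronting the mismatch between the left/right counts of $x$ and the $\lceil n/2\rceil$--$\lfloor n/2\rfloor$ split of $x'$. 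You are right that this mismatch can break a naive position-by-position bound, and right that for unrestricted instances the desired inequality can fail; you also point to the correct strip-specific ingredients (Lemma~\ref{lemma:Delta}, Observation~\ref{obs:annoying}, $b^j_i\ge3$). But your proposal stops at naming these ingredients without actually carrying out the case analysis that shows they suffice, so what you have is a plan rather than a proof. The paper's approach buys brevity by asserting the ``moved away from $\tfrac12$'' conclusion directly from Lemma~\ref{lemma:Delta}; your approach, if completed, would buy a more explicit account of exactly where the strip hypotheses are used---but as it stands the hard step is still missing.
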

\begin{proof}
First, by construction, the internal gaps in $x'$ are of length
$\Delta$ and $\Delta'_0,\Delta'_1 \leq \Delta$.  Thus, by
Lemma~\ref{lemma:unit-opt}, $\opt_0(x') = \opt_0(x)$.
By Lemma~\ref{lemma:Delta} we know that the sensors moved away from
$\half$, hence $\sum_i r'_i \geq \sum_i r_i$ and $\RR'(x') \leq
\RR'(x)$.
\end{proof}

Now we are ready to bound $\RR(x)$.
	
\begin{lemma}
\label{lemma:ratio}
$\RR'(x) \geq \frac{2}{3} \opt_0(x)$, for every instance $I =
(x,\mathbf{1})$ of \srsc, where sensors may be located outside
$[0,1]$.
\end{lemma}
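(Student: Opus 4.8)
Here $x'$ always denotes the stretched instance of $x$, $\Delta$ its maximum gap, and $n = |x| = |x'|$.

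The plan is to use Lemma~\ref{lemma:stretch} to pass from $x$ to its stretched instance $x'$, for which the inequality becomes a one‑line estimate. Since $\RR'(x') \le \RR'(x)$ and $\opt_0(x') = \opt_0(x)$ by Lemma~\ref{lemma:stretch}, it suffices to prove $\RR'(x') \ge \frac{2}{3}\opt_0(x')$. Because all batteries are unit, $B = n$ and $\overline{r'} = \frac1n\sum_{i=1}^n r'_i$, so $\RR'(x') = n^2/\sum_{i=1}^n r'_i$; and by Lemma~\ref{lemma:unit-opt}, $\opt_0(x') = 2/\Delta$ (the maximum gap of $x'$ equals that of $x$, as noted in the proof of Lemma~\ref{lemma:stretch}). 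Hence the entire statement reduces to establishing
\[
\sum_{i=1}^n r'_i \;\le\; \frac{3n^2\Delta}{4}~.
\]

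Next I would record the shape of $x'$. By construction its sensors lie at $x'_i = a + (i-1)\Delta$ for $i = 1,\dots,n$, where $a \eqdf x'_1$. By the proof of Lemma~\ref{lemma:unit-opt}, the optimal lifetime $2/\Delta$ of $x'$ is realized by the solution that assigns every sensor radius $\Delta/2$, and that solution is therefore feasible, i.e.\ $[0,1] \subseteq \bigcup_i [x'_i-\Delta/2,\,x'_i+\Delta/2]$. Since consecutive disks of this solution exactly touch, their union is the single interval $[\,a-\Delta/2,\ a+(n-\half)\Delta\,]$, so $a \le \Delta/2$ and $a \ge 1 - (n-\half)\Delta$.

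Now I would bound each $r'_i = \max\set{x'_i,\,1-x'_i}$ using the two bounds on $a$: $x'_i = a+(i-1)\Delta \le (i-\half)\Delta$, and $1-x'_i = 1-a-(i-1)\Delta \le (n+\half-i)\Delta$, whence $r'_i \le \Delta\cdot\max\set{i-\half,\ n+\half-i}$. Using $\max\set{u,v} = \frac{u+v}{2} + \frac{|u-v|}{2}$ with $u+v = n$ rewrites this as $r'_i \le \Delta\paren{\frac n2 + \abs{i-\frac{n+1}{2}}}$, and summing,
\[
\sum_{i=1}^n r'_i \;\le\; \Delta\paren{\frac{n^2}{2} + \sum_{i=1}^n\Bigl|i-\tfrac{n+1}{2}\Bigr|} \;=\; \Delta\paren{\frac{n^2}{2} + \floor{n/2}\ceil{n/2}} \;\le\; \Delta\paren{\frac{n^2}{2} + \frac{n^2}{4}} \;=\; \frac{3n^2\Delta}{4}~,
\]
which is exactly the bound needed. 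Plugging back gives $\RR'(x') = n^2/\sum_i r'_i \ge \frac{4}{3\Delta} = \frac23\cdot\frac2\Delta = \frac23\opt_0(x')$, and then $\RR'(x) \ge \RR'(x') \ge \frac23\opt_0(x') = \frac23\opt_0(x)$.

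The only genuine work is the reduction bookkeeping: confirming that the gap parameter $\Delta$ of $x'$ coincides with that of $x$ (so Lemma~\ref{lemma:unit-opt} applies to $x'$) and that feasibility of the uniform optimal solution really does confine $a$ to $[\,1-(n-\half)\Delta,\ \Delta/2\,]$; once that interval is in hand, the rest is the short estimate above. I would also remark that equality throughout forces $n$ even, $\Delta = 1/n$, and $x'_i = \frac{2i-1}{2n}$; for $n=2$ this is the instance $(\frac14,\frac34)$ of Observation~\ref{obs:rr}, so the $\threehalves$ bound is tight.
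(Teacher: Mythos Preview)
Your argument is correct. The overall strategy matches the paper's---reduce to the stretched instance via Lemma~\ref{lemma:stretch}, then bound $\overline{r}$ against $\opt_0 = 2/\Delta$---but your execution of the key estimate is different and cleaner. The paper computes $\overline{r}$ explicitly from the arithmetic-progression structure of $x'$, splitting into even and odd $n$ (and the odd case into two further sub-cases depending on whether $r_1 \geq 1$), and then invokes a separate covering inequality such as $n\Delta \geq 1$. You instead extract from $\opt_0(x') = 2/\Delta$ the two-sided bound $1-(n-\tfrac12)\Delta \le a \le \tfrac{\Delta}{2}$ on the leftmost position, which immediately gives $r'_i \le \Delta\max\{i-\tfrac12,\,n+\tfrac12-i\}$ and hence $\sum_i r'_i \le \tfrac{3n^2\Delta}{4}$ in one stroke, with no parity split. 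One small remark: you justify the bound on $a$ by citing the feasibility of the uniform $\Delta/2$ solution for $x'$, but Lemma~\ref{lemma:unit-opt} is stated for $x$; the cleanest way to get what you need is to note that $\opt_0(x') = 2/\Delta$ forces \emph{some} feasible assignment with all radii at most $\Delta/2$, whose union is then contained in $[a-\Delta/2,\,a+(n-\tfrac12)\Delta]$, and that interval must contain $[0,1]$.
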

\begin{proof}
By Lemma~\ref{lemma:stretch} we may assume that the instance is
stretched.  
First, suppose that $n$ is even.  Since $x$ is a stretched instance,
it must be the case that exactly half of the sensors lie to the left
of $1/2$, and exactly half lie to the right.
Hence,
\begin{align*}
\overline{r} 
\eqdf    \inv{n} \sum_{i=1}^n r_i 
& =    \inv{n} 
         \left[ \sum_{j=0}^{n/2-1} (r_{n/2} + j \Delta)  + 
                \sum_{j=0}^{n/2-1} (r_{n/2+1} + j \Delta) 
         \right] \\
& = \inv{n} 
       \left[ \frac{n}{2} \cdot r_{n/2} + \Delta \binom{n/2}{2} + 
              \frac{n}{2} \cdot r_{n/2+1} + \Delta \binom{n/2}{2} 
       \right] \\
& = \frac{r_{n/2} + r_{n/2+1}}{2} + \frac{2 \Delta}{n} \binom{n/2}{2} \\
& = \frac{1 + \Delta}{2} + \frac{\Delta (n-2)}{4} \\
& = \frac{1}{2} + \frac{n\Delta}{4}
~,
\end{align*}
where we have used the fact that since the sequence is stretched
$r_{n/2} + r_{n/2+1} = 1 + \Delta$.  Furthermore, since $n \Delta
\geq 1$, it now follows that
\[
\frac{\RR'(x)}{\opt_0(x)} 
=    \frac{n/\overline{r}}{2/\Delta} 
=	 \frac{n \Delta}{1 + n \Delta/2} 
=    \frac{1}{\frac{1}{n\Delta} + \frac{1}{2}} 
\geq \frac{2}{3} ~.
\]
		
If $n$ is odd, then w.l.o.g.\ 
there are $\frac{n+1}{2}$ sensors to the left of $1/2$, and
$\frac{n-1}{2}$ to the right.  Then
\begin{align*}
\overline{r} 
& = \inv{n} 
       \left[ \sum_{j=0}^{(n-1)/2} (r_{(n+1)/2} + j \Delta)  + 
              \sum_{j=0}^{(n-3)/2} (r_{(n+3)/2} + j \Delta) 
       \right] \\
& = \frac{1}{n} 
       \left[ \frac{n+1}{2} \cdot r_{(n+1)/2} + 
              \Delta \binom{(n+1)/2}{2} + 
              \frac{n-1}{2} \cdot r_{(n+3)/2} + 
              \Delta \binom{(n-1)/2}{2}
       \right] \\
& =    \frac{r_{(n+1)/2} + r_{(n+3)/2}}{2} + 
       \frac{r_{(n+1)/2} - r_{(n+3)/2}}{2n} +
       \frac{\Delta}{n} \cdot \frac{(n-1)^2}{4}  \\
& \leq \frac{1+\Delta}{2} + \frac{\Delta}{n} \cdot \frac{(n-1)^2}{4} \\
& =    \frac{1}{2} + \Delta \frac{n^2 + 1}{4n} 
~.
\end{align*}
We have two cases. 
If $r_1 \geq 1$, then there are $n-1$ gaps of size $\Delta$, as
well as one gap of size at most $\Delta/2$. Since the gaps cover the
entire interval, we have that $(n-1)\Delta + \frac{\Delta}{2} \geq 1$.
It follows that $n\Delta \geq \frac{2n}{2n-1}$.  Thus, we can
demonstrate the same bound, since
\[
\frac{\RR'(x)}{\opt_0(x)} 
=    \frac{n/\overline{r}}{2/\Delta}
\geq \frac{n \Delta}{1 + \frac{(n^2+1) \Delta}{2n}} 
=    \frac{1}{\frac{1}{n\Delta} + \frac{1}{2} + \frac{1}{2n^2}} 
\geq \frac{2n^2}{3n^2 -n + 1} 
>    \frac{2}{3} 
~.
\]
Finally, we consider the case where $r_1 < 1$.  For some $\epsilon
\in (0,\Delta/2]$, we can set $r_{(n+1)/2} = \frac{1}{2} + \epsilon$.
Since sensors $(n+1)/2$ and $(n+3)/2$ are of distance $\Delta$ from
one another, it follows that
\[
r_{\frac{n+3}{2}} - r_{\frac{n+1}{2}} 
= \paren{1/2 + \Delta - \epsilon} - \paren{1/2 + \epsilon}
= \Delta - 2 \epsilon 
~.
\]
Moreover, we will show that $\epsilon \leq \Delta /4$, and thus
$r_{(n+3)/2} - r_{(n+1)/2} \geq \Delta/2$.  To see this, note first
that it follows from the definition of a stretch sequence and the
assumption that $r_1 < 1$ that $r_1 = r_{(n+1)/2} + \Delta (n-1)/2$
and $r_2 = r_{(n+3)/2} - \Delta (n-3)/2$.  Hence their difference is
\[\textstyle
r_1 - r_n 
= ( r_{\frac{n+1}{2}} + \half \Delta (n-1) ) - 
  ( r_{\frac{n+3}{2}} + \half \Delta (n-3) )
= r_{\frac{n+1}{2}} - r_{\frac{n+3}{2}} + \Delta
= 2\epsilon 
~.
\]
However since $1-\Delta/2 \leq r_n \leq r_1 < 1$, it must be the case
that $r_1-r_n \leq \Delta/2$, and this implies that $\epsilon \leq
\Delta/4$.

Finally, a computation similar to the one above reveals that
\[
\overline{r} 
\leq \frac{r_{\frac{n+1}{2}} + r_{\frac{n+3}{2}}}{2} + 
       \frac{r_{\frac{n+1}{2}} - r_{\frac{n+3}{2}}}{2n} +
       \frac{\Delta}{n} \frac{(n-1)^2}{4} 
\leq \frac{1+\Delta}{2} - \frac{\Delta}{4n} + 
       \frac{\Delta}{n}  \frac{(n-1)^2}{4} 
=    \frac{1}{2} + \frac{n \Delta}{4} 
~.
\]
As this is the same bound that we obtained in the even case, we
similarly achieve the same $2/3$ bound. 
\end{proof}

\subsection{Putting It All Together}

It remains only to connect the pieces we have accumulated in the previous three sections. 

\begin{lemma}
\label{lemma:strip}
$\RR'(x^j,b^j) \geq \frac{2}{3} T_j$, for every strip $j$.  
\end{lemma}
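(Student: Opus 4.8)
The plan is simply to concatenate the three ingredients assembled in the previous subsections, together with the scaling observations that justified the unit-battery reduction. Fix a strip $j$. Recall that $\opt_0(x,\beta b) = \beta\cdot\opt_0(x,b)$ and $\RR(x,\beta b) = \beta\cdot\RR(x,b)$, that $\RR'$ obeys the same homogeneity (immediate from $\RR'(x,b) = (\sum_i b_i)^2/\sum_i b_i r_i$), and that $T_j$, being a lifetime, scales linearly with the batteries $b^j$ as well. Hence the inequality to be proved is invariant under rescaling $b^j$, and we may assume without loss of generality that every $b^j_i$ is an integer with $b^j_i \geq 3$, so that the construction of the unit-battery instance $I'_j = (y^j,\mathbf{1})$ and its solution $\sigma^j$ is available.

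First I would apply Lemma~\ref{lemma:unit-convex}, which transfers the \rr lower bound from the strip instance to the unit-battery instance: $\RR'(x^j,b^j) \geq \RR'(y^j,\mathbf{1})$. Second, I would observe that $y^j$ is exactly an instance of the type covered by Lemma~\ref{lemma:ratio} --- a \srsc instance with unit batteries whose sensors (the ``children'') may sit outside $[0,1]$ --- so that $\RR'(y^j,\mathbf{1}) \geq \tfrac{2}{3}\,\opt_0(y^j)$. Third, Lemma~\ref{lemma:unit-sol} produces the feasible solution $\sigma^j$ for $I'_j$ with lifetime exactly $T_j$, whence $\opt_0(y^j) \geq T_j$. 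Chaining these three facts,
\[
\RR'(x^j,b^j) \;\geq\; \RR'(y^j,\mathbf{1}) \;\geq\; \tfrac{2}{3}\,\opt_0(y^j) \;\geq\; \tfrac{2}{3}\,T_j,
\]
which is the desired inequality.

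No real computation remains; the work is bookkeeping, and the only point needing care is that the various surgical modifications made to the strip instance --- deleting redundant sensors, shrinking the two extreme sensors, and the integrality/$\geq 3$ rescaling --- leave an instance to which Observation~\ref{obs:annoying}, Lemma~\ref{lemma:Delta}, and hence Lemmas~\ref{lemma:unit-opt}--\ref{lemma:ratio} legitimately apply, even though the children of the extreme sensors may land outside $[0,1]$. I would verify this consistency explicitly, but I expect no genuine obstacle: Observation~\ref{obs:annoying} was tailored precisely so that Lemma~\ref{lemma:Delta} goes through, and Lemma~\ref{lemma:ratio} was stated in the generality ``sensors may be located outside $[0,1]$'' for exactly this purpose.
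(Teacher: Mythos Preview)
Your proposal is correct and matches the paper's approach exactly: the paper's proof consists of the single sentence ``The result follows immediately from Lemmas~\ref{lemma:unit-sol}, \ref{lemma:unit-convex}, and~\ref{lemma:ratio},'' which is precisely the chain $\RR'(x^j,b^j) \geq \RR'(y^j,\mathbf{1}) \geq \tfrac{2}{3}\opt_0(y^j) \geq \tfrac{2}{3}T_j$ you spell out. Your additional remarks on the rescaling and on why the modified strip instance satisfies the hypotheses of Lemmas~\ref{lemma:Delta}--\ref{lemma:ratio} are sound and simply make explicit what the paper leaves implicit.
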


\begin{proof}
The result follows immediately from Lemmas~\ref{lemma:unit-sol},
\ref{lemma:unit-convex}, and ~\ref{lemma:ratio}.
\end{proof}

Our main result now follows from our construction. 

\begin{theorem}
\label{thm:main}
\rr is a $\frac{3}{2}$-approximation algorithm for \sosc.
\end{theorem}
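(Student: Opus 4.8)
The plan is to assemble the ingredients built up in the three preceding subsections; the only genuinely new step is an energy-accounting inequality that relates the \rr lifetime of the whole instance to the sum of the \rr lifetimes of the strips.

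First I would fix an optimal schedule $S=(\rho,\tau)$ for $(x,b)$ with lifetime $T=\opt(x,b)$ and cut it into the strip instances $I_j=(x^j,b^j)$, $j=0,\ldots,\ell-1$, exactly as in Section~\ref{sec:rr}, so that $\sum_j T_j = T$. The key observation is that for every sensor $i$ we have $\sum_j b^j_i \le b_i$, where the sum ranges over the strips in which $i$ participates: before the modification step $b^j_i=\rho_i T_j$, and the strips containing $i$ are precisely those contained in $[\tau_i,\tau_i+b_i/\rho_i]\cap[0,T]$, whose total length is at most $b_i/\rho_i$; the subsequent removal of redundant sensors and the trimming of the batteries of the two extreme sensors only decrease the $b^j_i$'s. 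Since the \rr radius $r_i=\max\{x_i,1-x_i\}$ of a sensor does not depend on the strip, swapping the order of summation gives
\[
\RR(x,b) \;=\; \sum_{i}\frac{b_i}{r_i} \;\ge\; \sum_{i}\frac{\sum_j b^j_i}{r_i} \;=\; \sum_{j}\sum_{x_i\in x^j}\frac{b^j_i}{r_i} \;=\; \sum_{j}\RR(x^j,b^j).
\]

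Next I would push each strip instance through the chain of lemmas. Viewing $I_j$ as a (degenerate) \sosc instance, Lemma~\ref{lemma:convex} yields $\RR(x^j,b^j)\ge \RR'(x^j,b^j)$, and Lemma~\ref{lemma:strip} yields $\RR'(x^j,b^j)\ge \tfrac{2}{3}T_j$. Substituting into the display above gives
\[
\RR(x,b) \;\ge\; \sum_{j}\RR'(x^j,b^j) \;\ge\; \frac{2}{3}\sum_{j}T_j \;=\; \frac{2}{3}\,T \;=\; \frac{2}{3}\,\opt(x,b),
\]
so $\opt(x,b)/\RR(x,b)\le \tfrac{3}{2}$ for every instance; together with Observation~\ref{obs:rr} this bound is tight, though only the upper bound is needed for the statement.

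I expect the only delicate point to be the energy-accounting inequality $\sum_j b^j_i\le b_i$: a single sensor is generally active across several consecutive strips, so one must verify that the bound survives the modification step (redundant-sensor removal and battery trimming of the two extreme sensors), and that a possibly truncated top strip causes no trouble, which holds because $\Omega$ is defined so that every strip is fully contained in the active window of each of its sensors. Everything else is a routine substitution into the lemmas already established.
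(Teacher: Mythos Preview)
Your proposal is correct and follows essentially the same approach as the paper's proof: the paper also swaps the order of summation to get $\sum_j \RR(x^j,b^j)\le \RR(x,b)$ via the inequality $\sum_{j:x_i\in x^j} b^j_i\le b_i$, and then chains Lemmas~\ref{lemma:convex} and~\ref{lemma:strip} to obtain $\RR(x,b)\ge\sum_j \RR'(x^j,b^j)\ge \tfrac{2}{3}\sum_j T_j=\tfrac{2}{3}\opt(x,b)$. Your write-up simply spells out the justification for $\sum_j b^j_i\le b_i$ a bit more carefully than the paper does.
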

\begin{proof}
First, observe that 
\[
\sum_j \RR(x^j,b^j) 
=    \sum_j \sum_{x_i \in x^j} \frac{b^j_i}{r_i}
=    \sum_i \inv{r_i} \sum_{j : x_i \in x^j} b^j_i
\leq \sum_i \inv{r_i} b_i 
=    \RR(x,b)
~.
\]
By Lemmas~\ref{lemma:convex} and~\ref{lemma:strip} we have that
\[
\RR(x,b) 
\geq \sum_j \RR(x^j,b^j)
\geq \sum_j \RR'(x^j,b^j)
\geq \sum_j \frac{2}{3} T_j
=    \frac{2}{3} T
=    \frac{2}{3} \opt(x,b)
~,
\]
and we are done.
\end{proof}


\subsection{Strip Cover}

Theorem~\ref{thm:main} readily extends to the \strip problem. 

\begin{theorem}
\rr is a $\frac{3}{2}$-approximation algorithm for \strip.
\end{theorem}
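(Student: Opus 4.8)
The plan is to rerun the proof of Theorem~\ref{thm:main} essentially verbatim; the only new ingredient is the observation that a \strip schedule can still be sliced into finitely many strips within each of which every active sensor has a fixed radius.

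First I would fix an optimal \strip schedule $S$ for the instance $(x,b)$, given by piece-wise constant functions $\rho_i(t)$ with $\int_0^\infty \rho_i(t)\,dt \le b_i$ and lifetime $T$ equal to the \strip optimum. Because each $\rho_i(\cdot)$ has only finitely many pieces, the set $\Omega$ of times in $[0,T]$ at which some sensor changes its radius (including switching on or off) is finite; write $\Omega = \set{0=\omega_0 < \cdots < \omega_\ell = T}$ and cut $[0,T]$ into the sub-intervals $[\omega_j,\omega_{j+1}]$. On the $j$th sub-interval the radius of each sensor is a constant $\rho_i^j$, so exactly as in Section~\ref{sec:rr} we obtain a \srsc instance $I_j = (x^j,b^j)$ with $b_i^j = \rho_i^j \cdot T_j$, where $T_j = \omega_{j+1}-\omega_j$, and the solution $\rho^j$ achieves lifetime $T_j$ (radii here may be arbitrary positive reals and children may land outside $[0,1]$, both of which Lemma~\ref{lemma:strip} already allows). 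After performing on each $I_j$ the same post-processing as before (delete redundant sensors, shrink the two extreme sensors), which only decreases the $b_i^j$'s, Lemma~\ref{lemma:strip} yields $\RR'(x^j,b^j) \ge \frac{2}{3} T_j$ for every $j$.

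The one point where time-varying radii must actually be checked is the accounting of battery usage. For each sensor $i$ we have $\sum_{j : x_i \in x^j} b_i^j = \sum_j \rho_i^j T_j = \int_0^T \rho_i(t)\,dt \le b_i$, and the post-processing only shrinks the left-hand side. Since \rr is the identical algorithm in both settings (it never changes a radius), it follows, exactly as in the proof of Theorem~\ref{thm:main}, that
\[
\sum_j \RR(x^j,b^j) = \sum_j \sum_{x_i \in x^j} \frac{b_i^j}{r_i} = \sum_i \frac{1}{r_i} \sum_{j : x_i \in x^j} b_i^j \le \sum_i \frac{b_i}{r_i} = \RR(x,b) .
\]
Combining this with Lemma~\ref{lemma:convex} applied strip by strip gives
\[
\RR(x,b) \ge \sum_j \RR(x^j,b^j) \ge \sum_j \RR'(x^j,b^j) \ge \sum_j \tfrac{2}{3} T_j = \tfrac{2}{3} T ,
\]
i.e.\ $\RR(x,b) \ge \frac{2}{3}\,\opt(x,b)$ for the \strip optimum, which is the claimed $\frac{3}{2}$-approximation guarantee.

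I expect essentially no hard step: all of Section~\ref{sec:rr} (strip cutting, the reduction to unit-battery \srsc, and the $\frac{2}{3}$ analysis) was deliberately phrased in terms of individual \srsc strips, so it transfers unchanged. The only things to be mildly careful about are the finiteness of $\Omega$ — this is precisely where the hypothesis that each $\rho_i(\cdot)$ is piece-wise constant is used — and the routine remark that the per-strip integer/$\ge 3$ battery rescaling hidden inside Lemma~\ref{lemma:strip} is applied independently within each strip and therefore never conflicts across strips.
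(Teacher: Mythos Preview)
Your proposal is correct and is exactly the argument the paper has in mind: the paper gives no separate proof, merely remarking that ``Theorem~\ref{thm:main} readily extends to the \strip problem,'' and your write-up supplies precisely the two missing observations (finiteness of the set of radius-change times because each $\rho_i(\cdot)$ is piece-wise constant, and the battery accounting $\sum_j b_i^j = \int_0^T \rho_i(t)\,dt \le b_i$) that make the Section~\ref{sec:rr} machinery go through unchanged.
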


\section{Duty Cycle Algorithms}
\label{sec:dc}

In this paper we analyzed the \rr algorithm in which each
sensor works alone.  One may consider a more general version of this
approach, where a schedule induces a partition of the sensors into
sets, or \emph{shifts}, and each shift works by itself.  In \rr each
shift consists of one active sensor.  We refer to such an algorithm as a
\emph{duty cycle} algorithm.

In this section we show that, in the worst case, no duty cycle
algorithm outperforms \rr.  More specifically, we show that the
approximation ratio of any duty cycle algorithm is at least
$\threehalves$.

\begin{lemma}
\label{lemma:DC}
The approximation ratio of any duty cycle algorithm is at least
$\threehalves$ for both \sosc and \strip.
\end{lemma}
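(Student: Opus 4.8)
The plan is to defeat every duty cycle algorithm at once with a single small instance: three sensors at $x = (\frac14,\frac34,\frac34)$ with batteries $b = (2,1,1)$. First I would pin down the optimum. In the schedule where sensor $1$ turns on at time $0$ with radius $\frac14$ (covering $[0,\frac12]$ for $2/\frac14 = 8$ time units), sensor $2$ turns on at time $0$ with radius $\frac14$ (covering $[\frac12,1]$ on $[0,4]$), and sensor $3$ turns on at time $4$ with radius $\frac14$ (covering $[\frac12,1]$ on $[4,8]$), the whole interval is covered until time $8$; and by Observation~\ref{obs:lifetime} no schedule exceeds $2\sum_i b_i = 8$. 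Both facts carry over verbatim to \strip, so $\opt = 8$ in either model. Note that this optimal schedule is not itself a duty cycle schedule, since sensor $1$ is active simultaneously with sensor $2$ and, later, with sensor $3$, and so the three sensors cannot be split into time-disjoint shifts that realize it.

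The next step is to bound the contribution of a single shift. In a duty cycle schedule the sensors are partitioned into shifts run one after another, and --- exactly as in \rr --- within a shift all of its sensors are active, with fixed radii, for the shift's entire duration; hence a shift $\sigma$ contributes at most $L(\sigma) \eqdf \max\set{\min_{i\in\sigma} b_i/\rho_i \,:\, \bigcup_{i\in\sigma}[x_i-\rho_i,\,x_i+\rho_i]\supseteq[0,1]}$. Three elementary estimates suffice. First, since $r_1 = r_2 = r_3 = \max\set{\frac14,\frac34} = \frac34$, we have $L(\set{1}) = 2/\frac34 = \frac83$ and $L(\set{2}) = L(\set{3}) = 1/\frac34 = \frac43$. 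Second, if $1\in\sigma$ then $L(\sigma)\le 4$: the point $1$ must be covered either by sensor $1$, forcing $\rho_1\ge\frac34$ and hence $L(\sigma)\le 2/\frac34 = \frac83$, or by some sensor $j\in\set{2,3}\cap\sigma$, forcing $\rho_j\ge\frac14$ and hence $L(\sigma)\le 1/\frac14 = 4$. Third, $L(\set{2,3})\le\frac43$, since covering the point $0$ forces one of sensors $2,3$ to take radius $\ge\frac34$.

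Now I would run through the five partitions of $\set{1,2,3}$. The partition into singletons gives $\frac83+\frac43+\frac43 = \frac{16}{3}$; each of the partitions $\set{1,2},\set{3}$ and $\set{1,3},\set{2}$ gives at most $4+\frac43 = \frac{16}{3}$; the trivial partition $\set{1,2,3}$ gives at most $4$; and the partition $\set{2,3},\set{1}$ gives at most $\frac43+\frac83 = 4$. In every case the total lifetime is at most $\frac{16}{3} = \frac23\cdot 8 = \frac23\,\opt$, so the approximation ratio of any duty cycle algorithm is at least $\threehalves$; and since $\opt$ and all the shift bounds above are identical for \sosc and \strip, this holds in both models.

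I expect the main obstacle to be not the arithmetic but pinning down the model precisely enough that a shift cannot covertly reconstruct the optimal schedule --- e.g.\ by keeping sensor $1$ at radius $\frac14$ while sensors $2$ and $3$ take turns inside one ``shift''. The argument relies on the natural convention, generalizing \rr, that within a shift every sensor of the shift is active for the whole shift with a fixed radius; this is exactly what forces any shift containing sensor $1$ together with sensor $2$ or $3$ to expire at time $4$ rather than $8$, and hence what makes the $\threehalves$ bound go through.
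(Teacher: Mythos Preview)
Your proof is correct and follows the same approach as the paper: the identical instance $x=(\tfrac14,\tfrac34,\tfrac34)$, $b=(2,1,1)$, the same optimal schedule attaining $\opt=8$, and the same conclusion that no duty cycle schedule exceeds $\tfrac{16}{3}$. Your enumeration of the five partitions is in fact more thorough than the paper's own argument, which simply asserts that \rr (and the partition $\{1,2\},\{3\}$) is best among duty cycle schedules without checking the remaining cases; your concern about the per-shift model is also apt, and matches the convention the paper uses implicitly.
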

\begin{proof}
Consider an instance where $x = (\frac{1}{4}, \frac{3}{4},
\frac{3}{4})$ and $b = (2,1,1)$.  An optimal solution is obtained by
assigning $\rho_1 = \rho_2 = \rho_3 = \frac{1}{4}$, $\tau_1 = \tau_2 =
0$ and $\tau_3 = 4$.  That is, sensor $1$ covers the interval
$[0,0.5]$ for $8$ time units, sensors $2$ covers $[0.5,1]$ until time
$4$, and sensors $3$ covers $[0.5,1]$ from time $4$ to $8$.  This
solution is optimal in that it achieves the maximum possible lifetime
of $8 = 2 \sum_i b_i$.

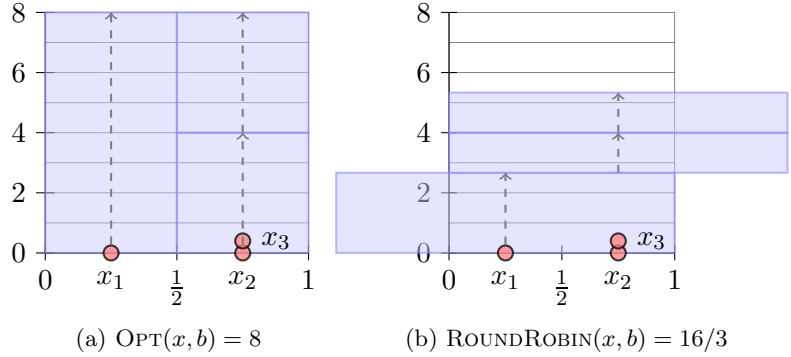
\begin{figure}[t]
\centering
\subfloat[$\opt(x,b) = 8$]{			\begin{tikzpicture}[xscale=3.5, yscale=0.4,
				coverage/.style={circle,draw=blue!50,fill=blue!20,thick,opacity=0.5},
				lifetime/.style={->,dashed,thick,gray},
				sensor/.style={circle,draw=black,fill=red!50,thick,opacity=0.75,inner sep=0pt,minimum size=2mm}]
				\def\T{8}
				\draw[step=1cm,gray,very thin] (0,0) grid (1, \T);	
				\draw[-] (0,0) -- (1,0) coordinate (x axis);
				\foreach \x/\xtext in {0, 0.5/\frac{1}{2}, 1}
					\draw (\x, 7pt) -- (\x, -7pt) node[anchor=north] {$\xtext$};
				\draw[-] (0,0) -- (0,8) coordinate (y axis);
				\foreach \y/\ytext in {0, 2,..., \T}
					\draw (1pt,\y) -- (-1pt,\y) node[anchor=east] {$\ytext$};
				\coordinate (a) at (0,0);	
				\coordinate (b) at (1,0);	
				\draw [coverage] (0,0) rectangle (1/2, \T);
				\draw [coverage] (1/2,0) rectangle (1, 4);
				\draw [coverage] (1/2,4) rectangle (1, \T);
				\node [sensor] (s1) at (1/4, 0) [label=below:$x_1$] {};
				\node [sensor] (s2) at (3/4, 0) [label=below:$x_2$] {};
				\node [sensor] (s3) at (3/4, 0.4) [label=right:$x_3$] {};
				\draw [lifetime] (s1) -- (1/4,\T);
				\draw [lifetime] (s2) -- (3/4,4);
				\draw [lifetime] (3/4,4) -- (3/4,\T);
			\end{tikzpicture}}
\subfloat[$\rr(x,b) = 16/3$]{			\begin{tikzpicture}[xscale=3, yscale=0.4,
				coverage/.style={circle,draw=blue!50,fill=blue!20,thick,opacity=0.5},
				lifetime/.style={->,dashed,thick,gray},
				sensor/.style={circle,draw=black,fill=red!50,thick,opacity=0.75,inner sep=0pt,minimum size=2mm}]
				\def\T{8}
				\draw[step=1cm,gray,very thin] (0,0) grid (1, \T);	
				\draw[-] (0,0) -- (1,0) coordinate (x axis);
				\foreach \x/\xtext in {0, 0.5/\frac{1}{2}, 1}
					\draw (\x, 7pt) -- (\x, -7pt) node[anchor=north] {$\xtext$};
				\draw[-] (0,0) -- (0,8) coordinate (y axis);
				\foreach \y/\ytext in {0, 2,..., \T}
					\draw (1pt,\y) -- (-1pt,\y) node[anchor=east] {$\ytext$};
				\draw [coverage] (-1/2,0) rectangle (1, 8/3);
				\draw [coverage] (0,8/3) rectangle (3/2, 12/3);
				\draw [coverage] (0,12/3) rectangle (3/2, 16/3);
				\coordinate (a) at (0,0);	
				\coordinate (b) at (1,0);	
				\node [sensor] (s1) at (1/4, 0) [label=below:$x_1$] {};
				\node [sensor] (s2) at (3/4, 0) [label=below:$x_2$] {};
				\node [sensor] (s3) at (3/4, 0.4) [label=right:$x_3$] {};
				\draw [lifetime] (s1) -- (1/4,8/3);
				\draw [lifetime] (3/4,8/3) -- (3/4,4);
				\draw [lifetime] (3/4,4) -- (3/4,16/3);
			\end{tikzpicture}}
\caption{Best schedule vs.\ best duty cycle schedule. 
Here $x = (\frac{1}{4}, \frac{3}{4}, \frac{3}{4})$ and $b = (2,1,1)$.}
\label{fig:DC}
\end{figure}

On the other hand, the best duty cycle algorithm is \rr, which
achieves a lifetime of $16/3$ time units.  (The shifts $\{1,2\}$ and
$\{3\}$ would also result in a lifetime of $16/3$ time units.)
Both schedules are shown in Figure~\ref{fig:DC}.
\end{proof}


\section{Set Radius Strip Cover}
\label{sec:srsc}

In this section we present an optimal $O(n^2 \log n)$-time algorithm
for the \srsc problem.
%
%
%
%
Recall that in \srsc we may only set the radii of the sensors since
all the activation times must be set to 0.  More specifically, we
assign non-zero radii to a subset of the sensors which we call \emph{active}, while the rest of
the sensors get $\rho_i=0$ and do not participate in the cover. 

Given an instance $(x,b)$, a radial assignment $\rho$ is called
\emph{proper} if the following conditions hold:
\begin{enumerate*}
\item Every sensor is either inactive, or exhausts its battery by 
      time $T$, where $T$ is the lifetime of $\rho$. That is, $\rho_i
      \in \{0,b_i/T\}$,
\item No sensor's coverage is superfluous. That is, for every active 
      sensor $i$ there is a point $u_i \in [0,1]$ such that $u_i \in
      [x_i-\rho_i,x_i+\rho_i]$ and $u_i \not\in
      [x_k-\rho_k,x_k+\rho_k]$, for every active $k \neq i$.
\end{enumerate*}

\begin{lemma}
\label{lemma:proper}
There is
a proper optimal assignment for every \srsc instance.
\end{lemma}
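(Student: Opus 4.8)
The plan is to start with an arbitrary optimal assignment $\rho$ of lifetime $T=\opt_0(x,b)$ and to massage it, in two stages, into a proper assignment of the same lifetime; since $T$ is optimal, every modification that does not destroy coverage on $[0,1]\times[0,T]$ automatically leaves the lifetime equal to $T$. We may assume $T>0$, since otherwise all $b_i=0$ and the all\nobreakdash-zero assignment is vacuously proper. Observe also that activating a single sensor $i$ with radius $r_i=\max\{x_i,1-x_i\}\le 1$ is a feasible \srsc solution of lifetime $b_i/r_i\ge b_i$, so $T\ge b_i$ for every $i$; hence each radius $b_i/T$ assigned below lies in $[0,1]$.

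\emph{Stage 1 (establishing the first condition).} First, for every active sensor $i$ whose death time $b_i/\rho_i$ exceeds $T$ (equivalently $\rho_i<b_i/T$), I increase $\rho_i$ to exactly $b_i/T$. This only enlarges coverage intervals, so $[0,1]$ remains covered throughout $[0,T]$, and each such sensor now dies at exactly $T$. Let $A$ be the set of sensors whose death time equals $T$ in the resulting assignment; $A$ contains every sensor that was alive at time $T$ in the original schedule, and those sensors (at radii no smaller than the new ones) covered $[0,1]$, so $A$ covers $[0,1]$. Second, I deactivate ($\rho_i\gets 0$) every remaining active sensor whose death time is strictly less than $T$. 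Since the sensors in $A$ already cover $[0,1]$ and each of them is alive on all of $[0,T]$, deleting the early\nobreakdash-dying sensors keeps $[0,1]$ covered at every time in $[0,T]$. After Stage~1 the active set is exactly $A$, every active sensor has $\rho_i=b_i/T$ (the first condition), and — the point that matters for the next stage — every active sensor is alive on the whole of $[0,T]$.

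\emph{Stage 2 (establishing the second condition).} I now repeatedly search for an active sensor $i$ that is \emph{geometrically redundant}, meaning $[x_i-\rho_i,x_i+\rho_i]\cap[0,1]\subseteq\bigcup_{k\ne i\text{ active}}[x_k-\rho_k,x_k+\rho_k]$, and deactivate it. Such a deletion does not change any other radius, so the surviving active sensors still satisfy the first condition and are still alive on all of $[0,T]$; and at every time $t\in[0,T]$ the union of the remaining active intervals still contains $[0,1]$ (a point of $[0,1]$ inside $i$'s interval is recaptured by redundancy, and every other point of $[0,1]$ already lay outside $i$'s interval), so the lifetime stays $T$. Shrinking the active set cannot create a new redundancy — a containment that failed against a larger union still fails against a smaller one — so the surviving sensors remain non\nobreakdash-redundant and the process terminates. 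When it halts, every active sensor $i$ has a point $u_i\in[0,1]\cap[x_i-\rho_i,x_i+\rho_i]$ lying in no other active sensor's interval, which is the second condition, and the resulting assignment is proper and optimal.

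The main obstacle — the only step needing real care — is Stage~1: one must notice that sensors dying \emph{after} $T$ and sensors dying \emph{before} $T$ require opposite treatments (enlarge versus delete), justify the deletions through the $A$\nobreakdash-covers\nobreakdash-$[0,1]$ argument, and recognize that the genuine payoff of the stage is that all surviving active sensors become simultaneously alive on the whole interval $[0,T]$. That invariant is exactly what makes the single\nobreakdash-snapshot notion of redundancy used in Stage~2 equivalent to redundancy over all of $[0,T]$, so that deactivating a geometrically redundant sensor is safe. The remaining verifications (boundedness of radii, termination, extraction of the witness points $u_i$) are routine.
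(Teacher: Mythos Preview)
Your proof is correct and follows essentially the same two-stage strategy as the paper: first normalize every surviving radius to $b_i/T$ (the paper does this in one stroke by setting $\rho'=b/T$ for all sensors, whereas you split it into ``enlarge the slow-dying ones, delete the fast-dying ones''), and then greedily prune redundant sensors. Your extra bookkeeping (the $T=0$ case, the verification that $b_i/T\le 1$) fills in details the paper leaves implicit; the one awkward spot is the parenthetical ``at radii no smaller than the new ones,'' which should read ``no smaller than the old ones,'' but the intended argument is clear and sound.
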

\begin{proof}
Let $I=(x,b)$ be a \srsc instance, and let $\rho$ be an optimal
assignment for $I$ with lifetime $T$.  We first define the assignment
$\rho' = b/T$ and show that it is feasible.  Since $\rho$ has lifetime
$T$, any point $u \in [0,1]$ is covered by some sensor $i$ throughout
the time interval $[0,T]$.  It follows that $\rho_i \leq b_i/T =
\rho'_i$.  Hence, $u \in [x_i-\rho'_i,x_i+\rho'_i]$, and thus $\rho'$
has lifetime $T$.
Next, we construct an assignment $\rho''$.  Initially, $\rho''=\rho'$.
Then starting with $i=1$, we set $\rho''_i=0$ as long as $\rho''$
remains feasible.  Clearly, $\rho''_i \in \{0,b_i/T\}$.  Furthermore,
for every sensor $i$ there must be a point $u_i \in
[x_i-\rho''_i,x_i+\rho''_i]$ such that $u_i \not\in
[x_k-\rho''_k,x_k+\rho''_k]$, for every active $k \neq i$, since
otherwise $i$ would have been deactivated.  Hence, $\rho''$ is a
proper assignment with lifetime $T$, and is thus optimal.
\end{proof}

Given a proper optimal solution, we add two dummy sensors, denoted $0$
and $n+1$, with zero radii and zero batteries at $0$ and at $1$,
respectively.  The dummy sensors are considered active.
%
%
We show that the optimal lifetime of a given instance is determined by
at most two active sensors.

\begin{lemma}
\label{lemma:pair}
Let $T$ be the optimal lifetime of a given \srsc instance $I=(x,b)$.
There exist two sensors $i, k \in \{0,\ldots,n+1\}$, where $i < k$,
such that $T = \frac{b_k+b_i}{x_k - x_i}$.
\end{lemma}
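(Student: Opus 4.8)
The plan is to start from a \emph{proper} optimal assignment $\rho$, which exists by Lemma~\ref{lemma:proper}, with lifetime $T$. If $T=0$ the claim is trivial (take the two dummy sensors $0$ and $n+1$), so assume $T>0$; then every active sensor satisfies $\rho_i = b_i/T$. Adjoin the dummy sensors at $0$ and $1$, declared active. It suffices to exhibit two active sensors $i<k$ whose coverage intervals just touch, i.e.\ $x_i+\rho_i = x_k-\rho_k$: since $\rho_i=b_i/T$ and $\rho_k=b_k/T$, this rearranges to $T = \frac{b_i+b_k}{x_k-x_i}$, with $x_k>x_i$ because $x_i=x_k$ would force $\rho_i=\rho_k=0$, impossible for two distinct active sensors. (The index order of the sensors agrees with their location order, so it is enough to find such a pair among \emph{consecutive} active sensors in location order.)

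Order the active sensors by location as $z_0,z_1,\dots,z_M$, with $z_0$ the dummy at $0$ and $z_M$ the dummy at $1$. The first key step is a geometric fact: for every consecutive pair, $x_{z_{\ell+1}}-\rho_{z_{\ell+1}} \le x_{z_\ell}+\rho_{z_\ell}$, i.e.\ the two coverage intervals overlap or touch. Indeed, if there were a gap between the right edge of $z_\ell$ and the left edge of $z_{\ell+1}$, it would have to be covered by some other active sensor, which (by consecutiveness) is centered at or to the left of $z_\ell$ or at or to the right of $z_{\ell+1}$; but an active sensor centered at or left of $z_\ell$ that reaches past $z_\ell$'s right edge must, since its center is $\le x_{z_\ell}$, also reach past $z_\ell$'s left edge, hence contain $z_\ell$'s interval and deny $z_\ell$ a private point, contradicting properness (symmetrically for the other side). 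Some care is needed for the boundary pairs $(z_0,z_1)$ and $(z_{M-1},z_M)$ and for any real sensor located exactly at $0$ or $1$, but the same reasoning, together with the fact that the solution covers all of $[0,1]$ including neighborhoods of the endpoints, handles these; in fact the boundary pairs force $z_1$ to reach $\le 0$ and $z_{M-1}$ to reach $\ge 1$.

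Now suppose, for contradiction, that no pair of active sensors $i<k$ satisfies $\rho_i+\rho_k = x_k-x_i$. Applied to each consecutive pair $z_\ell,z_{\ell+1}$ this upgrades the weak overlap above to a \emph{strict} one, $x_{z_{\ell+1}}-\rho_{z_{\ell+1}} < x_{z_\ell}+\rho_{z_\ell}$ for all $\ell=0,\dots,M-1$; in particular $z_1$ reaches strictly left of $0$ and $z_{M-1}$ strictly right of $1$. Replace each radius by $\rho'_i = b_i/(T+\delta)$. Since there are only finitely many of these strict inequalities and the interval endpoints depend continuously on the radii (hence on $\delta$), all of them survive for $\delta>0$ small enough. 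A short argument — for $u\in[0,1]$ take the last real active sensor whose left edge is $\le u$; if its right edge were $<u$, the next active sensor's left edge would be both $>u$ and $<$ that right edge, a contradiction — shows that the shrunk intervals of the real active sensors still cover $[0,1]$. Hence lifetime $T+\delta$ is achievable, contradicting the optimality of $T$. Therefore some consecutive pair of active sensors touches exactly, which yields the desired $i<k$. The step most in need of care is the overlap claim with its boundary and tie variants, since that is where properness is used; the shrinking argument, though the conceptual heart, is routine.
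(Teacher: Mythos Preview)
Your proof is correct and follows essentially the same approach as the paper: properness forces every consecutive pair of active sensors (including the dummies) to overlap or touch, and if all overlaps were strict the radii could be uniformly shrunk while still covering $[0,1]$, contradicting optimality. The only difference is cosmetic---where you shrink via a small $\delta$-perturbation and appeal to continuity, the paper computes the explicit factor $\alpha=\min\frac{\rho_i+\rho_k}{x_k-x_i}$ over neighboring active pairs and sets $\rho'=\rho/\alpha$, which directly yields lifetime $\alpha T>T$.
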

\begin{proof}
Let $\rho$ be the proper optimal assignment, whose existence is
guaranteed by Lemma~\ref{lemma:proper}.  We claim that there exist two
neighboring active sensors $i$ and $k$, where $i<k$, such that $\rho_i
+ \rho_k = x_k - x_i$.  The lemma follows, since $\rho_i = b_i/T$ and
$\rho_k = b_k/T$.

Observe that if $\rho_i + \rho_k < x_k - x_i$, for two neighboring
active sensors $i$ and $k$, then there is a point in the interval
$(x_i,x_k)$ that is covered by neither $i$ and $k$, but is covered by
another sensor.  This means that either $i$ or $k$ is redundant, in
contradiction to $\rho$ being proper. Hence, $\rho_i + \rho_k \geq x_k
- x_i$, for every two neighboring active sensors $i$ and $k$.

Let $\alpha = \min \set{\frac{\rho_k + \rho_i}{x_k - x_i} : i, k
\text{ are active}}$.  If $\alpha = 1$, then we are done.  Otherwise,
we define the assignment $\rho' = \rho/\alpha$.  $\rho'$ is feasible
since $\rho'_i + \rho'_k = \frac{1}{\alpha} (\rho_i + \rho_k) \geq x_k
- x_i$, for every two neighboring active sensors $i$ and
$k$. Furthermore, the lifetime of $\rho'$ is $\alpha T$, in
contradiction to the optimality of $\rho$.
\end{proof}

Lemma~\ref{lemma:pair} implies that there are $O(n^2)$ possible
lifetimes.  This leads to an algorithm for solving \srsc.

\begin{theorem}
\label{thm:srsc}
There exists an $O(n^2 \log n)$-time algorithm for solving \srsc.
\end{theorem}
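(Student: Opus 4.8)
The plan is to reduce the optimization to a binary search over the $O(n^2)$ candidate lifetimes supplied by Lemma~\ref{lemma:pair}. Let $0$ and $n+1$ be the dummy sensors with $x_0=0,\,x_{n+1}=1$ and zero batteries, and set
\[
\calT \eqdf \set{ \tfrac{b_i+b_k}{x_k-x_i} : 0 \le i < k \le n+1,\ x_i < x_k }.
\]
First I would compute all elements of $\calT$ and sort them; this costs $O(n^2\log n)$ and will turn out to dominate the running time. By Lemma~\ref{lemma:pair} the optimal lifetime $T^*$ is an element of $\calT$.

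The key structural fact — essentially the first half of the proof of Lemma~\ref{lemma:proper} — is that for any target $T$, a lifetime of at least $T$ is achievable \emph{if and only if} the assignment $\rho^{(T)} \eqdf b/T$ covers $[0,1]$, i.e.\ $\bigcup_{i=1}^n [x_i - b_i/T,\, x_i + b_i/T] \supseteq [0,1]$. Indeed, in any feasible schedule of lifetime $\ge T$ every point $u\in[0,1]$ is held by some sensor $i$ throughout $[0,T]$, which forces $\rho_i \le b_i/T$ and hence $u$ lies in $i$'s range under $\rho^{(T)}$; conversely $\rho^{(T)}$ itself keeps all sensors alive up to time $T$. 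This coverage test can be performed in $O(n)$ time by a single left-to-right sweep over the (already sorted) locations, maintaining the rightmost point covered so far and checking that it never falls behind the next interval's left endpoint and eventually reaches $1$. Moreover the predicate ``$\rho^{(T)}$ covers $[0,1]$'' is monotone in $T$: decreasing $T$ only enlarges every radius $b_i/T$. Combining monotonicity with Lemma~\ref{lemma:pair} (which places $T^*$ in $\calT$ and, via Lemma~\ref{lemma:proper}, certifies that $T^*$ is achievable) shows that $T^*$ is exactly the largest element of $\calT$ passing the coverage test.

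Hence the algorithm sorts $\calT$, binary-searches for the largest feasible value using $O(\log|\calT|) = O(\log n)$ coverage tests at $O(n)$ each, and outputs $\rho^{(T^*)} = b/T^*$ (which may be pruned to a proper assignment as in Lemma~\ref{lemma:proper}, though this is unnecessary for optimality). The total cost is $O(n^2\log n) + O(n\log n) = O(n^2\log n)$.

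I do not expect a real obstacle: the only points needing care are the routine verification that the $O(n)$ sweep correctly decides interval coverage of $[0,1]$ when the intervals are listed in sorted order of centers, and the observation that, by monotonicity together with Lemmas~\ref{lemma:proper} and~\ref{lemma:pair}, ``optimal lifetime'' coincides with ``largest feasible candidate in $\calT$.'' (One could shave the logarithmic factor by replacing the sort with $O(\log n)$ rounds of linear-time weighted selection over $\calT$, but since the target bound already permits the sort, the simpler version suffices.)
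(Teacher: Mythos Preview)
Your proposal is correct and follows essentially the same approach as the paper: enumerate the $O(n^2)$ candidate lifetimes from Lemma~\ref{lemma:pair}, sort them, and binary-search for the largest one that passes the $O(n)$ coverage test with radii $b/T$. The paper's proof is terser but identical in structure, and your added justification of the feasibility test via Lemma~\ref{lemma:proper} and the monotonicity observation only makes the argument more explicit.
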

\begin{proof}
First if $n = 1$, then $\rho_1 \gets r_1 \eqdf \max(x_1,1-x_1)$ and we
are done.
Otherwise, let $T_{ik} \gets \frac{b_k + b_i}{x_k-x_i}$, for every $i, k
\in \{0,\ldots,n+1\}$ such that $i<k$.  After sorting the set
$\set{T_{ik} : i<k}$, perform a binary search to find the largest
potentially feasible lifetime.  The feasibility of candidate $T_{ik}$
can be checked using the assignment $\rho^{ik}_\ell \gets
b_\ell/T_{ik}$, for every sensor $\ell$.

There are $O(n^2)$ candidates, each takes $O(1)$ to compute, and
sorting takes $O(n^2 \log n)$ time.  Checking the feasibility of a
candidate takes $O(n)$ time, and thus the binary search takes $O(n
\log n)$.  Hence, the overall running time is $O(n^2 \log n)$.
\end{proof}


\section{Discussion and Open Problems}

We have shown that \rr, which is perhaps the simplest possible
algorithm, has a tight approximation ratio of $\threehalves$ for both
\sosc and \strip.  We have also shown that \sosc is NP-hard, but it
remains to be seen whether the same is true for \strip.  Future work
may include finding algorithms with better approximation ratios for
either problem. However, we have eliminated duty cycle algorithms as
candidates.
Observe that both \sosc and \stsc are NP-hard, while \srsc can be
solved in polynomial time.  This suggests that hardness comes from
setting the activation times.

We have assumed that the battery charges dissipate in direct
inverse proportion to the assigned sensing radius (e.g. $\tau =
b/\rho$). It is natural to suppose that an exponent could factor into
this relationship, so that, say, the radius drains in quadratic
inverse proportion to the sensing radius (e.g. $\tau = b/\rho^2$).
One could expand the scope of the problem to higher
dimensions. Before moving both the sensor locations and the
region being covered to the plane, one might consider moving one but
not the other. This yields two different problems: 1) covering the
line with sensors located in the plane; and 2) covering a region of
the plane with sensors located on a line.





\bibliographystyle{abbrv}
\bibliography{references}


\end{document}